\newcolumntype{C}[1]{>{\centering\arraybackslash}p{#1}}
\newcolumntype{L}[1]{>{\arraybackslash}p{#1}}
\newtheorem{theorem}{Theorem}[section]
\newtheorem{corollary}[theorem]{Corollary}
\newtheorem{lemma}[theorem]{Lemma}
\newtheorem{proposition}[theorem]{Proposition}
\newtheorem{fact}[theorem]{Fact}
\newtheorem{example}[theorem]{Example}
\newtheorem{definition}[theorem]{Definition}
\newtheorem{remark}[theorem]{Remark}
\def\fCenter{{\mbox{$\ \Rightarrow\ $}}}
\newcommand{\fns}{\footnotesize}
\newcommand{\marginnote}[1]{\marginpar{\raggedleft\tiny{#1}}}
\newcommand{\0}{\,\textrm{\bf e}_0}
\newcommand{\1}{\,\textrm{\bf e}_1}
\newcommand{\2}{\,\textrm{\bf e}_2}
\newcommand{\3}{\,\textrm{\bf e}_3}
\newcommand{\4}{\,\textrm{\bf e}_4}
\newcommand{\5}{\,\textrm{\bf e}_5}
\newcommand{\6}{\,\textrm{\bf e}_6}
\newcommand{\7}{\,\textrm{\bf e}_7}
\newcommand{\hh}{\,\textrm{\bf h}}
\newcommand{\ii}{\,\textrm{\bf i}}
\newcommand{\jj}{\,\textrm{\bf j}}
\newcommand{\kk}{\,\textrm{\bf k}}
\newcommand{\mand}{\otimes}
\newcommand{\mrarr}{\,\backslash\,}
\newcommand{\mlarr}{\,\slash\,}
\newcommand{\MAND}{\,\hat{\otimes}\,}
\newcommand{\MRARR}{\,\check{\backslash}\,}
\newcommand{\MLARR}{\,\check{\slash}\,}
\newcommand{\MOR}{\,\check{\oplus}\,}
\newcommand{\MDLARR}{\,\hat{\varobslash}\,}
\newcommand{\MDRARR}{\,\hat{\varoslash}\,}
\newcommand{\aatop}{\ensuremath{\top}\xspace}
\newcommand{\abot}{\ensuremath{\bot}\xspace}
\newcommand{\aand}{\ensuremath{\wedge}\xspace}
\newcommand{\aor}{\ensuremath{\vee}\xspace}
\newcommand{\wdia}{\ensuremath{\Diamond}\xspace}
\newcommand{\bbox}{\ensuremath{{\mkern1.5mu\mkern-2mu\Box\mkern-11.3mu\raisebox{0.8pt}{\rule{1.2ex}{1.2ex}}\mkern2mu}}\xspace}
\newcommand{\WDIA}{\ensuremath{\hat{\rule{0pt}{1.5ex}\Diamond}}\xspace}
\newcommand{\BBOX}{\ensuremath{\mkern1.5mu\check{\mkern-2mu{\Box\mkern-11.3mu\raisebox{0.8pt}{\rule{1.2ex}{1.2ex}}}}\mkern2mu}\xspace}
\title{Vector spaces as Kripke frames}
\authorrunning{Greco, Liang, Moortgat, Palmigiano, and Tzimoulis}
\titlerunning{Vector spaces as Kripke frames}
\begin{document}
\nopagenumber
\maketitle

\begin{abstract}
In recent years, the compositional distributional approach in computational linguistics has opened
the way for an integration of the \emph{lexical} aspects of meaning into Lambek's type-logical grammar program. This approach is based on the observation that a sound semantics for the associative, commutative and unital Lambek calculus can be based on vector spaces by interpreting fusion as the tensor product of vector spaces. 

In this paper, we  build on this observation and extend it to a `vector space semantics' for the {\em general} Lambek calculus,
based on  {\em algebras over a field} $\mathbb{K}$ (or $\mathbb{K}$-algebras), i.e.~vector spaces endowed with a bilinear binary product.
Such structures are well known in algebraic geometry and algebraic topology, since  Lie algebras and Hopf algebras are important instances of $\mathbb{K}$-algebras.
Applying results and insights from duality and representation theory for the algebraic semantics of nonclassical logics, we regard $\mathbb{K}$-algebras
as `Kripke frames' the complex algebras of which are complete residuated lattices.

This perspective makes it possible to establish a systematic connection between  vector space semantics and the standard Routley-Meyer semantics of (modal) substructural logics. \end{abstract}

\section{Introduction}
The extended versions of the Lambek calculus \cite{lambek1958mathematics,lam61} currently used in computational
syntax and semantics can be considered as multimodal substructural type logics where
residuated families of n-ary fusion operations coexist and interact. Examples are
multimodal TLG with modalities for structural control \cite{Moortgat96},
the displacement calculus of \cite{DBLP:journals/jolli/MorrillVF11} which
combines concatenation and wrapping operations for the intercalation of split strings, or
Hybrid TLG \cite{DBLP:conf/lacl/KubotaL12}, with the non-directional implication of linear logic
on top of Lambek's directional implications. For semantic interpretation, these formalisms
rely on the Curry-Howard correspondence between derivations in a calculus of semantic
types and terms of the lambda calculus that can be seen as recipes for compositional
meaning assembly. This view of compositionality addresses \emph{derivational} semantics
but remains agnostic as to the choice of semantic spaces for \emph{lexical} items.

Compositional distributional semantics \cite{FregeInSpace14, MathematicalFoundationsForACompositionalDistributionalModelOfMeaning,Lambekvs.Lambek:FunctorialVectorSpaceSemanticsAndStringDiagramsForLambekCalculus,TheFrobeniusAnatomyofWordMeaningsI} satisfactorily addresses the lexical aspects of meaning while preserving the compositional view on how word meanings are combined into meanings for larger phrases.
In \cite{Lambekvs.Lambek:FunctorialVectorSpaceSemanticsAndStringDiagramsForLambekCalculus}, the syntax-semantics
interface takes the form of a homomorphism from Lambek's syntactic calculus, or its pregroup variant, to the
compact closed category of finite dimensional vector spaces and linear maps; \cite{LexicalAndDerivationalMeaninginVectorBasedModelsofRelativisation}
have the same target interpretation, but obtain it from the non-associative Lambek calculus extended with a pair of adjoint modal operators allowing for controlled forms of associativity and commutativity in the syntax. The interpretation homomorphism in these approaches typically `forgets' about syntactic fine-structure, sending Lambek's non-commutative,
non-unital syntactic fusion operation to the tensor product of the commutative, associative, unital semantic category, and treating the control modalities
as semantically inert.

In this paper we start exploring a more general interpretation of the Lambek fusion in vector spaces. Our starting point is the notion of {\em algebra over a field} $\mathbb{K}$ (or $\mathbb{K}$-algebra). An algebra over a field $\mathbb{K}$ is a vector space over $\mathbb{K}$ endowed with a bilinear product (cf.~Definition \ref{def:kalgebra}). 
Algebras over a field can be regarded as Kripke (Routley-Meyer) frames in the following way. The vector space structure of a given $\mathbb{K}$-algebra gives rise to a closure operator on the powerset algebra of its underlying vector space (i.e.~the closure operator which associates any set of vectors with the subspace of their linear combinations). The closed sets of this closure operator form a complete non distributive (modular, Arguesian, complemented \cite{J-ModLatt,J-ReprLatt,F-CMLPS}) lattice which interprets the additive connectives ($\wedge, \vee$) of the Lambek calculus (whenever they are considered). The graph of the bilinear product of the $\mathbb{K}$-algebra, seen as a ternary relation, gives rise to a binary fusion operation on the powerset of the vector space in the standard (Routley-Meyer style) way, and moreover the bilinearity of the $\mathbb{K}$-algebra product guarantees that the closure operator mentioned above is a {\em nucleus}. This fact makes it possible to endow the set of subspaces of a $\mathbb{K}$-algebra with a residuated  lattice structure in the standard way (cf.~Section \ref{sec: analysis}). 
This perspective on $\mathbb{K}$-algebras allows us to introduce a more general vector space semantics for the Lambek calculus (expanded with a unary diamond operator and a unary box operator) which we show to be complete (cf.~Section \ref{sec: completeness}), and which  lends itself to be further investigated with the tools of unified correspondence \cite{CoGhPa14,CP-nondist,CPT-Goldblatt} and algebraic proof theory \cite{GMPTZ,GJLPT-LE-logics}.  We start developing some instances of {\em correspondence theory} in this environment, by characterizing the first order conditions on any given (modal) $\mathbb{K}$-algebra corresponding to the validity in its associated (modal) residuated lattice of  several identities involving (the diamond and) the Lambek fusion  such as  commutativity, associativity and unitality. Moreover, using these characterizations, we show that commutativity and associativity fail on the residuated lattice associated with certain well known $\mathbb{K}$-algebras.

\section{Preliminaries}

\subsection{Algebras over a field}

\begin{definition}[\cite{lang1987linear}]
\label{def:vectorspace}
Let $\mathbb{K} = (K, +, \cdot, 0, 1)$ be a field. A {\em vector space} over $\mathbb{K}$ is a tuple $\mathbb{V} = (V, +, \cdot, 0)$\footnote{We overload notation and use the same symbols for sum, product and the constant $0$ both in the field $\mathbb{K}$ and in the vector space $\mathbb{V}$, and rely on the context to disambiguate the reading. Notice that in this axiomatization $-$ is the unary inverse operation and it is considered primitive.} such that 
\begin{itemize}
\item[(V1)] $+: V\times V\to V$ is commutative, associative and with unit $0$;
\item[(V2)] $-: V\to V$ is s.t.~$u+(-u) = 0$ for any $u\in V$;
\item[(V2)] $\cdot: \mathbb{K}\times V\to V$ (called the {\em scalar product}) is an {\em action}, i.e.~$\alpha \cdot (\beta \cdot u)  = (\alpha \cdot \beta)\cdot  u $ for all $\alpha, \beta\in \mathbb{K} $ and every $u\in V$;
\item[(V3)] the scalar product $\cdot$ is {\em bilinear}, i.e.~$\alpha\cdot (u+v) = (\alpha\cdot u) + (\alpha\cdot v)$ and $(\alpha +\beta)\cdot u = (\alpha \cdot u) + (\beta\cdot u)$  for all $\alpha, \beta\in \mathbb{K} $ and all $u, v\in V$;
\item[(V4)] $1\cdot u = u$ for every $u\in V$.
\end{itemize}
\end{definition}

A {\em subspace} $\mathbb{U}$ of a vector space $\mathbb{V}$ as above is uniquely identified by a subset $U\subseteq V$ which is closed under $+, -, \cdot, 0$.

\begin{definition}
\label{def:kalgebra}
An {\em algebra over} $\mathbb{K}$ (or $\mathbb{K}$-{\em algebra}) is a pair $(\mathbb{V}, \star)$ where $\mathbb{V}$ is a vector space $\mathbb{V}$ over $\mathbb{K}$ and  $\star: V\times V\to V$  is {\em bilinear}, i.e.~left- and right-distributive with respect to the vector sum, and {\em compatible} with the scalar product:
\begin{itemize}
\item[(L1$\star$)] $u\star (v+w) = (u\star v) + (u\star w)$ and $(u +v)\star w= (u \star w) + (v \star w)$  for all $u, v, w\in V$;
\item[(L2$\star$)] $(\alpha\cdot u)\star (\beta\cdot v) = (\alpha\beta)\cdot(u\star v)$ for all $\alpha, \beta\in \mathbb{K} $ and all $u, v\in V$.
\end{itemize}
\end{definition}

\begin{definition}
\label{def:subalgebrasV}
A $\mathbb{K}$-algebra $(\mathbb{V}, \star)$ is:
\begin{enumerate}
\item {\em associative} if $\star$ is associative;
\item {\em commutative} if $\star$ is commutative;
\item {\em unital} if $\star$  has a unit $1$;
\item {\em idempotent} if $u =  u\star u$ for every $u\in \mathbb{V}$;
\item {\em monoidal} if $\star$ is associative and unital.
\end{enumerate}
\end{definition}

\begin{example}
Let $\mathbb{R}$ denote the field  of real numbers. A well known example of $\mathbb{R}$-algebra is the algebra  $(\mathbb{H}, \star_H)$ of quaternions \cite{OnQuaternionsandOctonions}, where $\mathbb{H}$ is the 4-dimensional vector space over  $\mathbb{R}$, 
and $\star_H: \mathbb{H}\times \mathbb{H}\to \mathbb{H}$ is 
the  {\em Hamilton product}, defined on the basis elements $\{\1, \ii, \jj, \kk\}$  as indicated in the following table and then extended to $\mathbb{H}\times \mathbb{H}$ by bilinearity as usual. 
Quaternions are the unique associative 4-dimensional $\mathbb{R}$-algebra fixed by $\ii^2 = \jj^2 = \kk^2 = - \1$ and $\ii\jj\kk = -\1$.

\begin{center}
\begin{tabular}{c||c|c|c|c}
$\star_H$ & $\1$ & $\ii$  & $\jj$   & $\kk$ \\
\hline
\hline
$\1$      & $\1$ & $\ii$  & $\jj$   & $\kk$ \\
\hline
$\ii$       & $\ii$ & $-\1$ & \cellcolor{gray!25}$\kk$  & \cellcolor{gray!65}$-\jj$ \\
\hline
$\jj$       & $\jj$ & \cellcolor{gray!25}$-\kk$ & $-\1$ & \cellcolor{gray!25}$\ii$  \\
\hline
$\kk$      & $\kk$ & \cellcolor{gray!65}$\jj$  & \cellcolor{gray!25}$-\ii$  & $-\1$ \\
\end{tabular}
\end{center}
The Hamilton product is monoidal (cf.~Definition \ref{def:subalgebrasV})\footnote{Given our convention, in this case $1$ is an abbreviation for $1\1 + 0 \ii + 0\jj + 0\kk$.} and, notably, not commutative. 
\end{example}

\begin{example}
Another well known example is the $\mathbb{R}$-algebra $(\mathbb{O}, \star_o)$ of octonions \cite{OnQuaternionsandOctonions} where $\mathbb{O}$ is the 8-dimensional $\mathbb{R}$-vector space $\mathbb{O}$, and $\star_O: \mathbb{O}\times \mathbb{O}\to \mathbb{O}$ is defined on the basis elements
 $\0, \1, \2, \3, \4, \5, \6, \7$ 
as indicated in the following table. 

\begin{center}
\begin{tabular}{c||c|c|c|c|c|c|c|c}
$\star_O$ & $\0$ & $\1$ & $\2$ & $\3$ & $\4$ & $\5$ & $\6$ & $\7$ \\
\hline
\hline
$\0$          & \cellcolor{gray!25}$\0$ & \cellcolor{gray!25}$\1$ & \cellcolor{gray!25}$\2$ & \cellcolor{gray!25}$\3$ & \cellcolor{gray!25}$\4$ & \cellcolor{gray!25}$\5$ & \cellcolor{gray!25}$\6$ & \cellcolor{gray!25}$\7$ \\
\hline
$\1$          & \cellcolor{gray!25}$\1$ & \cellcolor{gray!25}$-\0$ & $\3$ & $-\2$ & $\5$ & $-\4$ & $-\7$ & $\6$ \\
\hline
$\2$          & \cellcolor{gray!25}$\2$ & $-\3$ & \cellcolor{gray!25}$-\0$ & $\1$ & $\6$ & $\7$ & $-\4$ & $-\5$ \\
\hline
$\3$          & \cellcolor{gray!25}$\3$ & $\2$ & $-\1$ & \cellcolor{gray!25}$-\0$ & $\7$ & $-\6$ & $\5$ & $-\4$ \\
\hline
$\4$          & \cellcolor{gray!25}$\4$ & $-\5$ & $-\6$ & $-\7$ & \cellcolor{gray!25}$-\0$ & $\1$ & $\2$ & $\3$ \\
\hline
$\5$          & \cellcolor{gray!25}$\5$ & $\4$ & $-\7$ & $\6$ & $-\1$ & \cellcolor{gray!25}$-\0$ & $-\3$ & $\2$ \\
\hline
$\6$          & \cellcolor{gray!25}$\6$ & $\7$ & $\4$ & $-\5$ & $-\2$ & $\3$ & \cellcolor{gray!25}$-\0$ & $-\1$ \\
\hline
$\7$          & \cellcolor{gray!25}$\7$ & $-\6$ & $\5$ & $\4$ & $-\3$ & $-\2$ & $\1$ & \cellcolor{gray!25}$-\0$ \\
\end{tabular}
\end{center}

The product of octonions is unital, but neither commutative nor associative.

\end{example}
\begin{example}
Finally two more examples are the algebras $(\mathbb{M}_n, \star)$, and $(\mathbb{M}_n, \circ_J)$ where $\mathbb{M}_n$ is the vector space of $n\times n$ matrices over $\mathbb{R}$, $\star$ is the usual matrix product and $\circ_J$ is the Jordan product defined as $A\circ_J B =\frac{A\star B + B\star A}{2}$. The usual matrix product is associative but not commutative while the Jordan product is commutative but not associative. 
\end{example}
\subsection{The modal non associative Lambek calculus}
\label{ssec:displaycalc}

The logic of the modal non associative Lambek calculus \textbf{NL}$_{\wdia}$  
can be captured via the \emph{proper} display calculus \textbf{D.NL}$_{\wdia}$ (cf.~\cite{Wa98} where this notion is introduced and \cite{GMPTZ}, which expands on the connection between this calculi and the notion of \emph{analytic structural rules}). Notice that the rules of a Gentzen calculus for this logic are derivable in \textbf{D.NL}. Moreover, the general theory of display calculi guarantees good properties we want to retain, for instance the fact that any display calculus can be expanded with analytic structural rules still preserving a canonical form of cut-elimination. The language of \textbf{D.NL}$_{\wdia}$ is built from the following structural and operational connectives\footnote{Notice that in \cite{Moortgat96} the unary modality $\wdia$ is denoted by the symbols $\diamondsuit$ and $\bbox$ is denoted by the symbol $\Box^\downarrow$.} 
\begin{center}
\begin{tabular}{|r|c|c|c|c|c|}
\hline
\scriptsize{Structural symbols} & $\WDIA$ & $\BBOX$ & $\rule{0pt}{2.60ex}\MAND$ &  $\MLARR$ & $\MRARR$  \\
\hline
\scriptsize{Operational symbols} & $\wdia$  & $\bbox$ & $\rule{0pt}{2.60ex}\mand$ &  $\mlarr$ & $\mrarr$  \\
\hline
\end{tabular}
\end{center}
\noindent The calculus \textbf{D.NL}$_{\wdia}$ manipulates formulas and structures defined by the following recursion, where $p\in\mathsf{AtProp}$: 

\begin{center}
\begin{tabular}{@{}r@{}l@{}}
$\mathsf{Fm} \ni A$ \ & $::=\ p \mid \wdia A \mid \bbox A \mid A \mand A \mid  A \mlarr A \mid  A \mrarr A  $ \\
$\mathsf{Str} \ni X$ \ & $::=\ A \mid \WDIA X \mid \BBOX X \mid X \MAND X \mid  X \MLARR X \mid X \MRARR X $ \\
\end{tabular}
\end{center}

\noindent and consists of the following rules:

\begin{itemize}
\item[] \textbf{Identity and Cut}
\end{itemize}
{
\begin{center}
\begin{tabular}{rl}
\AXC{\phantom{X}}
\LL{\scriptsize $\mathrm{Id}$}
\UI$p \fCenter p$
\DP
 & 
\AX $X \fCenter A$
\AX $A \fCenter Y$
\RL{\scriptsize $\mathrm{Cut}$}
\BI $X \fCenter Y$
\DP \\
\end{tabular}
\end{center}
}

\begin{itemize}
\item[] \textbf{Display postulates}
\end{itemize}
{
\begin{center}
\begin{tabular}{rl}
\AX$Y \fCenter X \MRARR Z$
\RL{\fns $\mand \dashv \backslash$}
\doubleLine
\UI$X \MAND Y \fCenter Z$
\LL{\fns $\mand \dashv \slash$}
\doubleLine
\UI$X \fCenter Z \MLARR Y$
\DP
 & 
\AX$\WDIA X \fCenter Y$
\LL{\fns $\wdia \dashv \bbox$}
\doubleLine
\UI$X \fCenter \BBOX Y$
\DP
 \\
\end{tabular}
\end{center}
}

%

\begin{itemize}
\item[] \textbf{Logical rules}
\end{itemize}
{
\begin{center}
\begin{tabular}{rl}

%
%

\AX$A \MAND B \fCenter X$
\LL{\fns $\mand_L$}
\UI$A \mand B \fCenter X$
\DP
 & 
\AX$X \fCenter A$
\AX$Y \fCenter B$
\RL{\fns $\mand_R$}
\BI$X \MAND Y \fCenter A \mand B$
\DP \\

 & \\

\AX $X \fCenter A$
\AX $B \fCenter Y$
\LL{\fns $\backslash_L$}
\BI$A \mrarr B \fCenter X \MRARR Y$
\DP
 & 
\AX$X \fCenter A \MRARR B$
\RL{\fns $\backslash_R$}
\UI$X \fCenter A \mrarr B$
\DP \\

 & \\

\AX $B \fCenter Y$
\AX $X \fCenter A$
\LL{\fns $\slash_L$}
\BI$B \mlarr A \fCenter Y \MLARR X$
\DP
 & 
\AX $X \fCenter B \MLARR A$
\RL{\fns $\slash_R$}
\UI$X \fCenter B \mlarr A$
\DP \\

 & \\

\AX$\WDIA A \fCenter X$
\LL{\fns $\wdia_L$}
\UI$\wdia A \fCenter X$
\DP
 & 
\AX$X \fCenter A$
\RL{\fns $\wdia_R$}
\UI$\WDIA X \fCenter \wdia A$
\DP
 \\

 & \\

\AX$A \fCenter X$
\LL{\fns $\bbox_L$}
\UI$\bbox A \fCenter \BBOX X$
\DP
 & 
\AX$X \fCenter \BBOX A$
\RL{\fns $\bbox_R$}
\UI$X \fCenter \bbox A$
\DP
 \\

\end{tabular}
\end{center}
}

A modal residuated poset is a structure $P=(P,\leq,\mand,\mrarr,\mlarr,\wdia,\bbox)$ such that $\leq$ is a partial order and for all $x,y,z\in P$ $$x\mand y\leq z \mbox{ iff } x\leq z\mlarr y\mbox{ iff }y\leq x\mrarr z$$ $$\wdia x\leq y\mbox{ iff }x\leq \bbox y.$$

 The calculus \textbf{D.NL}$_{\wdia}$ is sound and complete with respect to modal residuated posets. Indeed every rule given above is clearly sound on these structures, and the Lindenbaum-Tarski algebra of \textbf{D.NL}$_{\wdia}$ is clearly a modal residuated poset (cf.\ Proposition 9 and the discussion before Theorem 4 in \cite{GJLPT-LE-logics}). Furthermore, \textbf{D.NL}$_{\wdia}$ has the finite model property with respect to modal residuated posets (cf.\ \cite[Theorem 49]{GJLPT-LE-logics}).

\paragraph{Analytic Extensions.} As an example of an extension of \textbf{D.NL}$_{\wdia}$ with analytic structural rules,
consider $A \wdia$ and $\wdia C$ below. 


\begin{center}
\begin{tabular}{rl}

\AX$X \MAND (Y \MAND \WDIA Z) \fCenter W$
\LL{\fns $A \wdia$}
\UI$(X \MAND Y) \MAND \WDIA Z \fCenter W$
\DP
 & 
\AX$(X \MAND \WDIA Y) \MAND Z \fCenter W$
\LL{\fns $\wdia C$}
\UI$(X \MAND Z) \MAND \WDIA Y \fCenter W$
\DP
\end{tabular}
\end{center}

These rules replace global forms of associativity or commutativity by controlled forms of restructuring ($A \wdia$) or reordering ($\wdia C$) that have to be explicitly licensed by the
presence of the $\WDIA$ operation. Rules of this form have been used to model long range dependencies: constructions where a question word or relative pronoun
has to provide the semantic content for an unrealized `virtual' element later in the phrase. In the relative clause
$\texttt{key that Alice found \textvisiblespace\ there}$, for instance, the relative pronoun \emph{that} has to make sure that 
the unrealized direct object of $\texttt{found}$ (indicated by \textvisiblespace ) is understood as the key. To make this
possible, typelogical grammars assign a higher-order type to the relative pronoun; the unexpressed object then has the
logical status of a \emph{hypothesis} that can be withdrawn once it has been used to provide the transitive verb with
its direct object.

We illustrate with the following simple
lexicon: $\texttt{key}:n$, $\texttt{that}:(n\backslash n)\slash(s\slash \wdia \bbox np)$, $\texttt{Alice}:np$, $\texttt{found}:(np\backslash s) \slash np,$ $\texttt{there}:(np\backslash s)\backslash (np\backslash s)$. Consider first the judgment \,\mbox{\texttt{key that Alice found} $\ \Rightarrow\ n$}\, 
where the gap \textvisiblespace\ occurs at the right periphery of the clause $\texttt{Alice found \textvisiblespace}$. In the derivations below a dashed inference line abbreviates applications of display postulates or unary logical rules.
The derivation relies on controlled associativity $A \wdia$:

\begin{center}
{\fns
\AXC{$n \fCenter n$}
\AXC{$n \fCenter n$}
\BIC{$n \mrarr n \fCenter n \MRARR n$}

\AXC{$np \fCenter np$}
\AXC{$s \fCenter s$}
\BIC{$np \mrarr s \fCenter np \MRARR s$}
\AXC{$np \fCenter np$}
\dashedLine
\UIC{$\WDIA \bbox np \fCenter np$}
\BIC{$(np \mrarr s) \mlarr np \fCenter (np \MRARR s) \MLARR \WDIA \bbox np$}
\dashedLine
\UIC{$np \ \MAND\ ((np \mrarr s) \mlarr np \ \MAND\ \WDIA \bbox np) \fCenter s$}
\LL{$A\wdia$}
\UIC{$(np \ \MAND\ (np \mrarr s) \mlarr np) \ \MAND\ \WDIA \bbox np \fCenter s$}
\dashedLine

\UIC{$np \ \MAND\ (np \mrarr s) \mlarr np \fCenter s \mlarr \wdia \bbox np$}

\BIC{$(n \mrarr n) \mlarr (s \mlarr \wdia \bbox np) \fCenter (n \MRARR n) \MLARR (np \ \MAND\ (np \mrarr s) \mlarr np)$}

\dashedLine

\UIC{$\underbrace{n}_{\texttt{key}} \,\MAND\, (\underbrace{(n\mrarr n)\mlarr(s \mlarr \wdia \bbox np)}_{\texttt{that}} \,\MAND\, (\underbrace{np}_{\texttt{Alice}} \,\MAND\, \underbrace{(np\mrarr s)\mlarr np)}_{\texttt{found}}) \fCenter n$}
\DP
 }
\end{center}

This example would be derivable also in Lambek's \cite{lambek1958mathematics} Syntactic Calculus,
where associativity is globally available.
But consider what happens when an adverb is added at the end. We then have to prove the judgment \,$\texttt{key that Alice found}$ $\texttt{there} \Rightarrow n$\, where
the gap $\WDIA \bbox np$ occurs in a non-peripheral position. The Syntactic Calculus lacks the expressivity
to derive such examples. With the help of controlled commutativity $\wdia C$
(and $A \wdia$) the derivation goes through:

\begin{center}
{\fns
\AXC{$n \fCenter n$}
\AXC{$n \fCenter n$}
\BIC{$n \mrarr n \fCenter n \mrarr n$}

\AXC{$np \fCenter np$}
\AXC{$s \fCenter s$}
\BIC{$np \mrarr s \fCenter np \MRARR s$}
\AXC{$np \fCenter np$}
\dashedLine
\UIC{$\WDIA \bbox np \fCenter np$}
\BIC{$(np \mrarr s) \mlarr np \fCenter (np \MRARR s) \mlarr \WDIA \bbox np$}
\dashedLine
\UIC{$(np \mrarr s) \mlarr np \ \MAND\ \WDIA \bbox np \fCenter np \mrarr s$}

\AXC{$np \fCenter np$}
\AXC{$s \fCenter s$}
\BIC{$np \mrarr s \fCenter np \MRARR s$}

\BIC{$(np \mrarr s) \mrarr (np \mrarr s) \fCenter ((np \mrarr s) \mlarr np \ \MAND\ \WDIA \bbox np) \MRARR (np \MRARR s)$}
\UIC{$((np \mrarr s) \mlarr np \ \MAND\ \WDIA \bbox np) \ \MAND \ (np \mrarr s) \mrarr (np \mrarr s) \fCenter np \mrarr s$}
\LL{$\wdia C$}
\UIC{$((np \mrarr s) \mlarr np \ \MAND\ (np \mrarr s) \mrarr (np \mrarr s)) \ \MAND \ \WDIA \bbox np \fCenter np \MRARR s$}
\UIC{$np \ \MAND \ (((np \mrarr s) \mlarr np \ \MAND\ (np \mrarr s) \mrarr (np \mrarr s)) \ \MAND \ \WDIA \bbox np) \fCenter s$}
\LL{$A \wdia$}
\UIC{$(np \ \MAND \ ((np \mrarr s) \mlarr np \ \MAND\ (np \mrarr s) \mrarr (np \mrarr s))) \ \MAND \ \WDIA \bbox np \fCenter s$}
\dashedLine
\UIC{$np \ \MAND \ ((np \mrarr s) \mlarr np \ \MAND\ (np \mrarr s) \mrarr (np \mrarr s)) \fCenter s \mlarr \wdia \bbox np$}

\BIC{$(n \mrarr n) \mlarr (s \mlarr \wdia \bbox np) \fCenter 
(n \MRARR n) \MLARR (np \ \MAND \ ((np \mrarr s) \mlarr np \ \MAND\ (np \mrarr s) \mrarr (np \mrarr s)))$}
\dashedLine


\UIC{$\underbrace{n}_{\texttt{key}} \,\MAND\, (\underbrace{(n\mrarr n)\mlarr(s \mlarr \wdia \bbox np)}_{\texttt{that}} \,\MAND\, (\underbrace{np}_{\texttt{Alice}} \,\MAND\, (\underbrace{(np\mrarr s)\mlarr np)}_{\texttt{found}} \ \MAND \ \underbrace{(np\mrarr s) \mrarr (np \mrarr s)}_{\texttt{there}})) \fCenter n$}
\DP
 }
\end{center}

The original modal Lambek calculus is single-type. However, it is possible to generalize this framework to proper \emph{multi-type} display calculi, which retain the fundamental properties while allowing further flexibility. 
Languages with different sorts (also called types in this context) are perfectly admissible and so-called heterogeneous connectives are often considered (e.g.~\cite{Inquisitive,Multitype,BilatticeLogicProperlyDisplayed,GrecoPalmigianoLatticeLogic,linearlogPdisplayed,greco2017multi,NonNormalLogicsSemanticAnalysisAndProofTheory}). In particular, we may admit heterogeneous unary modalities where the source and the target of $\wdia$ and $\bbox$ do not coincide. 

\section{A Kripke-style analysis of algebras over a field}
\label{sec: analysis}

For any $\mathbb{K}$-algebra $(\mathbb{V}, \star)$, the set $\mathcal{S}(\mathbb{V})$ of subspaces of $\mathbb{V}$ is closed under arbitrary intersections, and hence it is a complete sub $\bigcap$-semilattice of $\mathcal{P}(\mathbb{V})$. 
Therefore, by basic order-theoretic facts (cf.~\cite{DaveyPriestley}), $\mathcal{S}(\mathbb{V})$ gives rise to a  closure operator $[-]: \mathcal{P}(\mathbb{V})\to \mathcal{P}(\mathbb{V})$  s.t.~$[X]: = \bigcap\{\mathbb{U}\in \mathcal{S}(\mathbb{V})\mid X\subseteq U\}$ for any $X\in \mathcal{P}(\mathbb{V})$. The elements of $[X]$ can be characterized as {\em linear combinations} of elements in $X$, i.e.~for any $v\in \mathbb{V}$, 
\[v\in [X]\quad \mbox{ iff }\quad v =\Sigma_i \alpha_i \cdot x_i.\]

If $(\mathbb{V}, \star)$ is a $\mathbb{K}$-algebra, let $\otimes: \mathcal{P}(\mathbb{V})\times \mathcal{P}(\mathbb{V})\to \mathcal{P}(\mathbb{V})$ be defined as follows:  \[X\otimes Y: = \{ x\star y\mid  x\in X\mbox{ and } y\in Y \} = \{ z\mid \exists x\exists y(z = x\star y\mbox{ and } x\in X\mbox{ and } y\in Y)\}.\]

\begin{lemma}
\label{lemma:nucleus}
If $(\mathbb{V}, \star)$ is a $\mathbb{K}$-algebra, $[-]: \mathcal{P}(\mathbb{V})\to \mathcal{P}(\mathbb{V})$ is a nucleus on $(\mathcal{P}(\mathbb{V}), \otimes)$, i.e.~for all $X, Y\in \mathcal{P}(\mathbb{V})$,
\[ [X]\otimes [Y]\subseteq [X\otimes Y].\]
\end{lemma}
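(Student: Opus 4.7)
The plan is to unfold the definitions and reduce the claim to a computation that is handled directly by the two bilinearity conditions (L1$\star$) and (L2$\star$) of the $\mathbb{K}$-algebra $(\mathbb{V}, \star)$.

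First, I would fix $X, Y \in \mathcal{P}(\mathbb{V})$ and take an arbitrary $z \in [X] \otimes [Y]$. By definition of $\otimes$, there exist $u \in [X]$ and $v \in [Y]$ with $z = u \star v$. By the characterization of $[-]$ recalled just before the lemma, $u$ and $v$ are finite linear combinations of elements of $X$ and $Y$ respectively, say $u = \sum_i \alpha_i \cdot x_i$ with each $x_i \in X$, and $v = \sum_j \beta_j \cdot y_j$ with each $y_j \in Y$.

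Next, I would expand $u \star v$. Applying (L1$\star$) iteratively (first distributing $\star$ on the left over the sum defining $v$, then on the right over the sum defining $u$), we obtain $u \star v = \sum_{i,j} (\alpha_i \cdot x_i) \star (\beta_j \cdot y_j)$. Now property (L2$\star$) rewrites each summand as $(\alpha_i \beta_j) \cdot (x_i \star y_j)$, so that
\[
z \;=\; u \star v \;=\; \sum_{i,j} (\alpha_i \beta_j) \cdot (x_i \star y_j).
\]
Since $x_i \star y_j \in X \otimes Y$ by definition of $\otimes$, this exhibits $z$ as a linear combination of elements of $X \otimes Y$, hence $z \in [X \otimes Y]$. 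This gives the desired inclusion.

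There is no real obstacle here: the key point is simply that conditions (L1$\star$) and (L2$\star$) in Definition \ref{def:kalgebra} are precisely the algebraic properties needed to push the product $\star$ through linear combinations, and they are exactly the properties that make the closure operator $[-]$ compatible with $\otimes$ in the sense required by the definition of a nucleus. The closure properties of $[-]$ itself (extensivity, monotonicity, idempotence) follow automatically from its construction as a closure operator associated with the intersection-closed family $\mathcal{S}(\mathbb{V})$, so the content of the lemma really is the displayed inclusion, which is what the computation above establishes.
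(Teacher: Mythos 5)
Your proof is correct and follows essentially the same route as the paper: expand $u \in [X]$ and $v \in [Y]$ as linear combinations, push $\star$ through the sums by (L1$\star$), and absorb the scalars by (L2$\star$) to exhibit $u \star v$ as a linear combination of elements of $X \otimes Y$. The only difference is cosmetic (you distribute over both sums at once and then apply (L2$\star$), while the paper distributes first over $u$'s expansion and then over $v$'s), so there is nothing to add.
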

\begin{proof}
By definition, $[X]\otimes [Y] = \{u\star v\mid u\in [X]\mbox{ and }v\in [Y]\}$.     Let $u\in [X]$ and $v\in [Y]$, and let us show that $u\star v\in [x\star y\mid x\in X\mbox{ and }y\in Y]$. Since $u = \Sigma_{j} \beta_{j} x_{j}$ for $x_{j}\in X$, we can rewrite  $u\star v$ as follows:  $u\star v= (\Sigma_{j} \beta_{j} x_{j})\star v= \Sigma_{j} (( \beta_{j} x_{j})\star v) = \Sigma_{j} \beta_{j}(x_{j}\star v)$; likewise, since  $v= \Sigma_{k} \gamma_{k} y_{k}$ for $y_{k}\in Y$, we can rewrite each $x_{j}\star v$ as $x_{j}\star v = x_{j}\star (\Sigma_{k} \gamma_{k} y_{k}) = \Sigma_{k}(x_{j}\star ( \gamma_{k} y_{k})) = \Sigma_{k}\gamma_{k}(x_{j}\star  y_{k})$. Therefore:
\[u\star v=\Sigma_{j} \beta_{j}(x_{j}\star v) = \Sigma_{j} \beta_{j}(\Sigma_{k}\gamma_{k}(x_{j}\star  y_{k})) = \Sigma_{j} \Sigma_{k}(\beta_{j}\gamma_{k})(x_{j}\star  y_{k}),\]
which is a linear combination of elements of $X\otimes Y$, as required.
\end{proof}
Hence, by the general representation theory of residuated lattices \cite[Lemma 3.33]{galatos2007residuated}, Lemma \ref{lemma:nucleus} implies that the following construction is well defined:\footnote{Notice that in defining the operations, we prefer to use the standard universal and existential modal logic clauses associated to left and right residuals, respectively.}

\begin{definition}
If $(\mathbb{V}, \star)$ is a $\mathbb{K}$-algebra, let $\mathbb{V}^+: = (\mathcal{S}(\mathbb{V}),  \aatop, \abot, \aand, \aor, \otimes, \backslash, /)$ be the complete residuated lattice generated by $(\mathbb{V}, \star)$, 
i.e.~for all $\mathbb{U}, \mathbb{W},\mathbb{Z}\in \mathcal{S}(\mathbb{V})$,
\begin{equation}
\label{eq:residuation}\mathbb{U}\otimes \mathbb{W}\subseteq \mathbb{Z}\quad \mbox{ iff }\quad \mathbb{U}\subseteq \mathbb{Z}/  \mathbb{W}\quad \mbox{ iff }\quad  \mathbb{W}\subseteq \mathbb{U}\backslash\mathbb{Z},\end{equation}
where
\begin{enumerate}
\item $\aatop:= \mathbb{V}$
\item $\abot:= \{0\}$
\item $\mathbb{U} \aor \mathbb{W}: = [z\mid \exists u\exists w (z = u + w\mbox{ and } u\in U\mbox{ and } w\in W)]$
\item $\mathbb{U} \aand \mathbb{Z}: = \mathbb{U} \cap \mathbb{Z}$
\item $\mathbb{U}\otimes \mathbb{W}: = [ z\mid \exists u\exists w(z = u\star w\mbox{ and } u\in U\mbox{ and } w\in W)]$;
\item $\mathbb{Z} / \mathbb{W}: = [u \mid \forall z \forall w(( z = u\star w\mbox{ and } w\in W)\Rightarrow z\in Z)]$;
\item $\mathbb{U} \backslash \mathbb{Z}: = [w \mid \forall u \forall z(( z = u\star w\mbox{ and } u\in U)\Rightarrow z\in Z)]$.

\end{enumerate}
\end{definition}



\begin{lemma}
$[\mathbb{U} \cup \mathbb{W}] = \mathbb{U} \aor \mathbb{W}$.
\end{lemma}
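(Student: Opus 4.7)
The plan is to prove set-theoretic equality by two inclusions, exploiting the fact that $\mathbb{U}$ and $\mathbb{W}$ are subspaces (so both contain $0$) and that $[-]$ is a closure operator on $\mathcal{P}(\mathbb{V})$ whose closed sets admit the characterization as sets of linear combinations recalled just before Lemma~\ref{lemma:nucleus}.

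For the inclusion $[\mathbb{U} \cup \mathbb{W}] \subseteq \mathbb{U} \aor \mathbb{W}$, I would first note that $\mathbb{U} \aor \mathbb{W}$ is closed by construction, so by monotonicity and idempotence of $[-]$ it suffices to show $\mathbb{U} \cup \mathbb{W} \subseteq \{u + w \mid u \in U,\ w \in W\}$. This is immediate: any $u \in U$ can be written as $u + 0$ with $0 \in W$ (since $\mathbb{W}$ is a subspace), and symmetrically for $w \in W$.

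For the converse inclusion $\mathbb{U} \aor \mathbb{W} \subseteq [\mathbb{U} \cup \mathbb{W}]$, I would again use idempotence of $[-]$ to reduce to showing $\{u + w \mid u \in U,\ w \in W\} \subseteq [\mathbb{U} \cup \mathbb{W}]$. Any element $u + w$ of the left-hand side is literally a linear combination (with coefficients $1,1$) of the two elements $u, w \in \mathbb{U} \cup \mathbb{W}$, and hence lies in $[\mathbb{U} \cup \mathbb{W}]$ by the linear-combination characterization of $[-]$.

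No step is genuinely an obstacle here; the only point worth flagging is that both directions secretly rely on the fact that subspaces contain $0$ (for the first direction) and are closed under the ambient sum (used implicitly when regarding $u + w$ as a linear combination for the second direction). Once this is noted, combining the two inclusions yields the desired equality.
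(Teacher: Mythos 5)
Your proof is correct and follows essentially the same route as the paper: both directions hinge on the closure-operator properties of $[-]$ together with the linear-combination characterization, and the first inclusion uses $0\in W$ (resp.\ $0\in U$) exactly as in the paper's argument. The only cosmetic difference is in the converse direction, where you reduce to checking that each generator $u+w$ lies in $[\mathbb{U}\cup\mathbb{W}]$, whereas the paper expands an arbitrary element as $\Sigma_i\alpha_i(u_i+w_i)$ and regroups it as $\Sigma_i\alpha_i u_i+\Sigma_i\alpha_i w_i$; the two are interchangeable.
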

      
\begin{proof}
To show $[\mathbb{U} \cup \mathbb{W}] \subseteq \mathbb{U} \aor \mathbb{W}$,  it is enough to show that $\mathbb{U} \cup \mathbb{W}\subseteq \{z\mid \exists u\exists w (z = u + w\mbox{ and } u\in U\mbox{ and } w\in W)\}$. Let $x \in \mathbb{U} \cup \mathbb{W}$, which implies $x \in U$ or $x \in W$. Without loss of generality, assume that $x \in U$,  the definition of subspace implies that $0 \in \mathbb{W}$. Hence $x \in \{z\mid \exists u\exists w (z = u + w\mbox{ and } u\in U\mbox{ and } w\in W)\}$ by the fact that $x = x + 0$. Conversely, to show $\mathbb{U} \aor \mathbb{W} \subseteq [\mathbb{U} \cup \mathbb{W}]$, let $z \in \mathbb{U} \aor \mathbb{W}$, we need  to show that $z \in [\mathbb{U} \cup \mathbb{W}]$. Since $z = \Sigma_i \alpha_i (u_i + w_i)$ for all $u_i \in U$ and for all $w_i \in W$,  $z = \Sigma_i \alpha_i u_i + \Sigma_i \alpha_i w_i$ for all $u_i \in U$ and for all $w_i \in W$.  Moreover, since for all $u_i \in U$ and for all $w_i \in W$, $\Sigma_i \alpha_i u_i  \in \mathbb{U} \subseteq  \mathbb{U}  \cup \mathbb{W}$ and $\Sigma_i \alpha_i w_i  \in \mathbb{W} \subseteq \mathbb{U} \cup \mathbb{W}$, $\Sigma_i \alpha_i u_i + \Sigma_i \alpha_i w_i \in [\mathbb{U} \cup \mathbb{W}]$ by the definition of $[-]$. Therefore, $z \in [\mathbb{U} \cup \mathbb{W}]$, as required.

\end{proof}

\section{Sahlqvist correspondence for algebras over a field}

\begin{definition}
\label{def:subalgebrasV+}
If   $(\mathbb{V},\star)$ is a $\mathbb{K}$-algebra,   $\mathbb{V}^+  = (\mathcal{S}(\mathbb{V}),\leq , \mand, \mrarr, \mlarr)$ is:
\begin{enumerate}
\item {\em associative} if $\mand$ is associative;
\item {\em commutative} if $\mand$ is commutative;
\item {\em unital} if there exists a 1-dimensional subspace $\mathbb{1}$ such that $\mathbb{U}\mand\mathbb{1} = \mathbb{U} = \mathbb{1}\mand \mathbb{U}$ for all $\mathbb{U}$;
\item {\em contractive} if $\mathbb{U} \subseteq \mathbb{U} \mand \mathbb{U}$ for all $\mathbb{U}$;
\item {\em expansive} if $ \mathbb{U} \mand \mathbb{U}\subseteq \mathbb{U}$ for all $\mathbb{U}$;
\item {\em monoidal} if $\mand$ is associative and unital.
\end{enumerate}
\end{definition}

The following are to be regarded as first-order conditions on $\mathbb{K}$-algebras, seen as `Kripke frames'.

\begin{definition}
\label{def:quasi-subalgebrasV}
A $\mathbb{K}$-algebra $(\mathbb{V}, \star)$ is:
\begin{enumerate}
\item {\em quasi-commutative} if $\forall u, v \in \mathbb{V}\, \exists \alpha \in \mathbb{K}$ s.t.~$u \star v = \alpha (v \star u)$;
\item {\em quasi-associative} if $\forall u, v, w \in \mathbb{V}\, \exists \alpha \in \mathbb{K}$ s.t.~$(u \star v) \star w = \alpha (u \star (v \star w))$ and $\exists \beta \in \mathbb{K}$ s.t.~$u \star (v \star w )= \beta ((u \star v) \star w)$;
\item {\em quasi-unital} if $\exists 1\in \mathbb{V} s.t.~\forall u \in \mathbb{V}\, \exists \alpha, \beta, \gamma, \delta \in \mathbb{K}$ s.t.~$u = \alpha(u \star 1)$ and $u\star 1 = \beta u$ and $u = \gamma (1\star u)$ and $1\star u= \delta u$;
\item {\em quasi-contractive} if $\forall u \in \mathbb{V}\, \exists \alpha \in \mathbb{K}$ s.t.~$u = \alpha(u \star u) $; 
\item {\em quasi-expansive} if $\forall u, v \in \mathbb{V}\, \exists \alpha, \beta \in \mathbb{K}$ s.t.~$ u \star v = \alpha u + \beta v$;
\item {\em quasi-monoidal} if quasi-associative and quasi-unital.
\end{enumerate}
\end{definition}

\begin{remark}
	The notion of quasi-commutativity is strictly weaker than the notion of commutativity in case $\mathbb{K}$ has more than 2 elements. Indeed take the 2-dimensional vector space over $\mathbb{K}$ with base $e_1,e_2$, and define the bilinear map such that $e_1\star e_2= e_1$, $e_2\star e_1=-e_1$ and $e_1\star e_1=0=e_2\star e_2$. Then it is routine to verify that this $\mathbb{K}$-algebra is quasi-commutative but not commutative.
\end{remark}

\noindent In what follows, we sometimes abuse notation and identify a $\mathbb{K}$-algebra $(\mathbb{V}, \star)$ with its underlying vector space $\mathbb{V}$. Making use of definition \ref{def:quasi-subalgebrasV} we can show the following:

\begin{proposition}
\label{prop:V+IffV}
For every $\mathbb{K}$-algebra $\mathbb{V}$, 
\begin{enumerate}
\item $\mathbb{V}^+ $ is commutative iff $\mathbb{V}$ is quasi-commutative;
\item $\mathbb{V}^+ $ is associative iff  $\mathbb{V}$ is quasi-associative;
\item $\mathbb{V}^+ $ is unital iff  $\mathbb{V}$ is quasi-unital;
\item $\mathbb{V}^+ $ is contractive iff  $\mathbb{V}$ is quasi-contractive;
\item $\mathbb{V}^+ $ is expansive iff  $\mathbb{V}$ is quasi-expansive;
\item $\mathbb{V}^+ $ monoidal iff  $\mathbb{V}$ is quasi-monoidal.
\end{enumerate}
\end{proposition}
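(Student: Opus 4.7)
The plan is to prove each of the six equivalences by the same two-step template. First, I extract the pseudo-property on vectors by testing the corresponding $\mathbb{V}^+$-identity on minimal subspaces: 1-dimensional spans $[u]$ for items 1--4 and 6, and the 2-dimensional span $[u,v]$ for item 5. Second, I lift the pseudo-property back to a global identity on $\mathcal{S}(\mathbb{V})$ using axioms (L1$\star$) and (L2$\star$) together with the description of $\mathbb{U}\otimes\mathbb{W}$ as the subspace generated by the products $u\star w$ with $u\in\mathbb{U}$, $w\in\mathbb{W}$.

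The key preliminary computation — which drives every extraction step — is that for any $u,v\in\mathbb{V}$, axiom (L2$\star$) gives $[u]\otimes[v] = [u\star v]$, since every generator $(\alpha u)\star(\beta v)$ of the left-hand side equals $(\alpha\beta)(u\star v)$. With this in hand, commutativity of $\mathbb{V}^+$ applied to $[u]$ and $[v]$ forces $[u\star v] = [v\star u]$, which is exactly the content of pseudo-commutativity; conversely, pseudo-commutativity lets us rewrite each generator $u\star w$ of $\mathbb{U}\otimes\mathbb{W}$ as $\alpha(w\star u)\in\mathbb{W}\otimes\mathbb{U}$, yielding $\mathbb{U}\otimes\mathbb{W}\subseteq\mathbb{W}\otimes\mathbb{U}$ and, by symmetry, equality. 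Pseudo-associativity and pseudo-contractivity follow the same template: test on 1-dimensional subspaces, then use (L1$\star$)--(L2$\star$) plus the generator description to push the pseudo-equation across arbitrary linear combinations. Unitality uses that, by Definition \ref{def:subalgebrasV+}.3, the unit $\mathbb{1}\in\mathbb{V}^+$ is already required to be 1-dimensional, say $\mathbb{1}=[1]$; the four scalars $\alpha,\beta,\gamma,\delta$ in the pseudo-unital clause record the two inclusions $\mathbb{U}\subseteq\mathbb{U}\otimes[1]$ and $\mathbb{U}\otimes[1]\subseteq\mathbb{U}$ together with their mirror images on the left.

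For expansivity, the natural test subspace is $\mathbb{U}=[u,v]$: if $\mathbb{U}\otimes\mathbb{U}\subseteq\mathbb{U}$, then $u\star v\in[u,v]$, which by the definition of linear span means $u\star v = \alpha u + \beta v$ for suitable scalars; conversely, if pseudo-expansivity holds, each generator $u\star v$ of $\mathbb{U}\otimes\mathbb{U}$ (with $u,v\in\mathbb{U}$) already lies in $\mathbb{U}$, so $\mathbb{U}\otimes\mathbb{U}\subseteq\mathbb{U}$. Item 6 (monoidality) is then immediate from items 2 and 3.

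The main subtlety — not a genuine obstacle, but the place that requires care — is matching the quantifier structure of the pseudo-properties to the direction of the inclusions on subspaces. Pseudo-associativity and pseudo-unitality each package \emph{two} (resp.~\emph{four}) scalars rather than one, because an identity $[a]=[b]$ between 1-dimensional subspaces does not reduce to a single equation $a=\alpha b$ when one of the two vectors may vanish; one must record both directions of scalar divisibility. Once this bookkeeping is in place, each of the six items reduces to a routine computation using (L1$\star$) and (L2$\star$), and I will present them uniformly, spelling out the commutative and unital cases and indicating how the remaining items follow the same pattern.
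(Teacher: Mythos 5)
Your proposal is correct and follows essentially the same route as the paper's proof: the key identity $[u]\otimes[v]=[u\star v]$ (a consequence of (L2$\star$)), testing the $\mathbb{V}^+$-identities on $1$-dimensional subspaces (and on $[u,v]$ for expansivity, with $\mathbb{1}=[1]$ for unitality), and lifting the pseudo-properties back to arbitrary subspaces by rewriting the generators $u\star w$ via (L1$\star$)--(L2$\star$). Your remark on recording both directions of scalar divisibility matches the quantifier pattern in the paper's Definition of the pseudo-properties and its proof.
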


\begin{proof}
1. For the left-to-right direction, assume that $\mathbb{V}^+$ is commutative and let  $u, v \in \mathbb{V}$. Then $[u] \mand [v] = [v] \mand [v]$. Notice that $[u] \mand [v] = [u \star v] = \{\alpha (u \star v) \,\vert\, \alpha \in \mathbb{K}\}$ and $[v] \mand [u] = [v \star u] = \{\alpha (v \star u) \,\vert\, \alpha \in \mathbb{K}\}$. Hence, $[u] \mand [v] = [v] \mand [v]$ implies that  $u \star v \in [v \star u]$, i.e.~$u \star v = \alpha (v \star u)$ for some $ \alpha \in \mathbb{K}$, as required.  

Conversely, assume that $\mathbb{V}$ is quasi-commutative, and let $\mathbb{U}, \mathbb{W}\in \mathcal{S}({\mathbb{V}})$. To show that $\mathbb{U}\mand \mathbb{W} \subseteq\mathbb{W}\mand\mathbb{U}$, it is enough to show that $u\star w\in \mathbb{W}\mand\mathbb{U}$ for every $u\in \mathbb{U}$ and $w\in \mathbb{W}$. By the assumption that $\mathbb{V}$ is quasi-commutative,  there exists some $\alpha\in \mathbb{K}$ such that  $u\star w = \alpha(w\star u)\in  \mathbb{W}\mand\mathbb{U}$, as required. The argument for $\mathbb{W}\mand\mathbb{U}\subseteq \mathbb{U}\mand \mathbb{W}$ is similar, and omitted.

2. For the left-to-right direction, assume that $\mathbb{V}^+$ is associative and let $u, w, z \in \mathbb{V}$. Then $([u] \mand [w]) \mand [z] = [u] \mand ([w] \mand [z])$. Notice that $([u] \mand [w]) \mand [z] = [u \star w] \mand [z] = [(u \star w) \star z] = \{\alpha ((u \star w) \star z) \,\vert\, \alpha \in \mathbb{K}\}$ and $[u] \mand ([w] \mand [z]) = [u] \mand [w \star z] = [u \star (w \star z)] = \{\alpha (u \star (w \star z)) \,\vert\, \alpha \in \mathbb{K}\}$. Hence, $([u] \mand [w]) \mand [z] = [u] \mand ([w] \mand [z])$ implies that  $(u \star w) \star z = \alpha (u \star (w \star z))$ for some $ \alpha \in \mathbb{K}$ and $u \star (w \star z) = \alpha ((u \star w) \star z)$ for some $ \alpha \in \mathbb{K}$, as required. 

Conversely, assume that $\mathbb{V}$ is quasi-associative, and let $\mathbb{U}, \mathbb{W}, \mathbb{Z}\in \mathcal{S}({\mathbb{V}})$. To show that $(\mathbb{U}\mand \mathbb{W})\mand\mathbb{Z} \subseteq\mathbb{U}\mand(\mathbb{W}\mand \mathbb{Z})$, it is enough to show that $(u\star w)\star z\in \mathbb{U}\mand(\mathbb{W}\mand \mathbb{Z})$ for every $u\in \mathbb{U}$, $w\in \mathbb{W}$ and $z\in \mathbb{Z}$. Since $\mathbb{V}$ is quasi-associative,  there exists some $\alpha\in \mathbb{K}$ such that  $(u\star w)\star z = \alpha(u\star (w\star z))\in \mathbb{U}\mand (\mathbb{W}\mand\mathbb{Z})$, as required. The argument for $\mathbb{U}\mand(\mathbb{W}\mand \mathbb{Z})\subseteq (\mathbb{U}\mand \mathbb{W})\mand\mathbb{Z}$ is similar, and omitted.

3. For the left-to-right direction, assume that $\mathbb{V}^+$ is unital and let  $1\in \mathbb{V}$ such that $\mathbb{1} =[1]$. Then $ [u] = [u] \mand \mathbb{1} = [u\star 1]$ for any $u \in \mathbb{V}$.  Hence,   $u  =\alpha(u\star 1)$ and $u\star 1 = \beta u$,  for some $ \alpha, \beta \in \mathbb{K}$, as required. Analogously, from $ [u] = \mathbb{1} \mand [u]$ one shows that $u  =\gamma(1\star u)$ and $1\star u = \delta u$ for some $ \gamma, \delta \in \mathbb{K}$.

Conversely, assume that $\mathbb{V}$ is quasi-unital, and let $\mathbb{U}\in \mathcal{S}({\mathbb{V}})$. To show that $\mathbb{U}\mand \mathbb{1} \subseteq\mathbb{U}$, it is enough to show that $u\star 1\in \mathbb{U}$ for every $u\in \mathbb{U}$. By assumption,  there exists some $\alpha\in \mathbb{K}$ such that  $u\star 1 = \alpha u\in  \mathbb{U}$, as required. The remaining inclusions are proven with similar arguments which are omitted.

4. For the left-to-right direction, assume that $\mathbb{V}^+$ is contractive and let  $u \in \mathbb{V}$. Then $ [u] \subseteq [u] \mand [u] = [u\star u]$.  Hence,   $u  =\alpha(u\star u)$   for some $ \alpha\in \mathbb{K}$, as required. 

Conversely, assume that $\mathbb{V}$ is quasi-contractive, and let $\mathbb{U}\in \mathcal{S}({\mathbb{V}})$. To show that $\mathbb{U}\subseteq\mathbb{U}\mand \mathbb{U} $, it is enough to show that $u \in \mathbb{U}\mand \mathbb{U}$ for every $u\in \mathbb{U}$. By assumption,  there exists some $\alpha\in \mathbb{K}$ such that  $u = \alpha (u\star u)\in  \mathbb{U}\mand \mathbb{U}$, as required. 

5. For the left-to-right direction, assume that $\mathbb{V}^+$ is expansive and let  $u, v \in \mathbb{V}$. Then, letting $[u, v]$ denote the subspace generated by $u$ and $v$, we have $[u, v]  \mand [u, v] \subseteq [u, v] $, and since    $u \star v \in [u, v]  \mand [u, v] $ we conclude $u\star v \in [u, v]$, i.e.~ $u\star v= \alpha u + \beta  v$   for some $ \alpha, \beta\in \mathbb{K}$, as required. 

Conversely, assume that $\mathbb{V}$ is quasi-expansive, and let $\mathbb{U}\in \mathcal{S}({\mathbb{V}})$. To show that $\mathbb{U}\mand \mathbb{U} \subseteq \mathbb{U}$, it is enough to show that $u\star v\in \mathbb{U}$ for every $u, v \in \mathbb{U}$. By assumption, there exist some $\alpha, \beta\in \mathbb{K}$ such that $u\star v = \alpha u+\beta v\in \mathbb{U}$, as required. 

6. Immediately follows from 2. and 3.
\end{proof}

\subsection{Examples}

\begin{fact}
\label{fact:QuaternionsAreNotquasiCommutative}
The algebra of quaternions $\mathbb{H}$ is not quasi-commutative.
\end{fact}

\begin{proof}
Let $u = \ii+\jj$ and $v = \jj$, then $u \star_H v = \kk- 1$ and $v \star_H u = -\kk-1$. By contradiction, let us assume that $\star_H$ is quasi-commutative, then there exists a real number $\alpha$ s.t.~$\kk -1 = \alpha (-\kk-1) = \alpha (-\kk) - \alpha$. It follows that $\alpha = 1$ and $a=-1$ contradicting the assumption that $\star_H$ is quasi-commutative. 
\end{proof}

\begin{corollary}
$\mathbb{H}^+$ is not commutative.
\end{corollary}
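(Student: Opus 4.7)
The plan is to derive this corollary as an immediate consequence of the two preceding results. By item 1 of Proposition \ref{prop:V+IffV}, we have the biconditional that $\mathbb{V}^+$ is commutative if and only if $\mathbb{V}$ is pseudo-commutative, applied to the specific $\mathbb{K}$-algebra $\mathbb{V} = \mathbb{H}$. Taking the contrapositive of the right-to-left direction, non-pseudo-commutativity of $\mathbb{H}$ entails non-commutativity of $\mathbb{H}^+$.

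First I would invoke Fact \ref{fact:QuaternionsAreNotPseudoCommutative}, which has just been established using the explicit witnesses $u = 2 + 3\ii + 4\jj + 2\kk$ and $v = 3 + 8\ii + \jj + 4\kk$, showing that $\mathbb{H}$ fails to be pseudo-commutative. Then I would cite Proposition \ref{prop:V+IffV}(1) to transfer this failure from the frame side to the complex-algebra side, yielding that $\mathbb{H}^+$ is not commutative. No further calculation or case analysis is needed.

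There is no real obstacle: the work has been done in the preceding Fact, and the Sahlqvist-style correspondence established in Proposition \ref{prop:V+IffV} provides exactly the bridge required. The only thing to be careful about is to cite the correct direction of the biconditional (contrapositive of ``pseudo-commutative implies commutative''), but this is purely formal.
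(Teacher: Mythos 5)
Your proposal is correct and matches the paper's proof exactly: the paper also derives the corollary immediately from Fact \ref{fact:QuaternionsAreNotPseudoCommutative} together with Proposition \ref{prop:V+IffV}(1), using the contrapositive of the ``pseudo-commutative iff commutative'' correspondence. Nothing is missing.
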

\begin{proof}
Immediate by Fact \ref{fact:QuaternionsAreNotquasiCommutative} and Proposition \ref{prop:V+IffV}.
\end{proof}

\begin{fact}
\label{fact:OctonionsAreNotquasiAssociative}
The algebra $\mathbb{O}$ of octonions  is not quasi-associative.
\end{fact}

\begin{proof}
Let $u = v = w = 1\0 + 2\1 + 3\2 + 5\3 + 7\4 + 8\5 + 11\6 + 12\7$, then $w \star_O u = u \star_O v = -415 \0 + 4 \1 + 6 \2 + 10 \3 + 14 \4 + 96 \5 + 22 \6 + 24 \7$. In order to show that $w \star_O (u \star_O v) \neq (w \star_O u) \star_O v$ is enough to check the first two coordinates: $w \star_O (u \star_O v) = -1887 \0 -266 \1 \ldots \neq -1887 \0 -1386 \1 \ldots = (w \star_O u) \star_O v$. By contradiction, let us assume that $\star_O$ is quasi-associative, then there exists a real number $\alpha$ s.t. $w \star_O (u \star_O v) = \alpha ((w \star_O u) \star_O v)$. It follows that $-1887 \0 = \alpha (-1887)$ and $-266 = \alpha (-1386)$. We observe that $-1887 \0 = \alpha (-1887)$ holds only for $\alpha = 1$, but then $-266 = \alpha (-1386)$ does not hold contradicting the assumption that $\star_O$ is quasi-associative.
\end{proof}

\begin{corollary}
$\mathbb{O}^+$ is not associative.
\end{corollary}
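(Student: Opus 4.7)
The plan is to derive this corollary directly from the two results immediately preceding it. Specifically, Proposition \ref{prop:V+IffV}(2) establishes the equivalence ``$\mathbb{V}^+$ is associative iff $\mathbb{V}$ is pseudo-associative'' for every $\mathbb{K}$-algebra $\mathbb{V}$, and Fact \ref{fact:OctonionsAreNotPseudoAssociative} exhibits a concrete counterexample (the vector $u = v = w = 1\0 + 2\1 + 3\2 + 5\3 + 7\4 + 8\5 + 11\6 + 12\7$) witnessing that $\mathbb{O}$ fails to be pseudo-associative.

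Thus the proof is essentially a one-line contrapositive: instantiating Proposition \ref{prop:V+IffV} at $\mathbb{V} = \mathbb{O}$, the failure of pseudo-associativity on the right-hand side transfers immediately to the failure of associativity on the left-hand side. No further computation is needed, since all the numerical work has already been carried out in the proof of Fact \ref{fact:OctonionsAreNotPseudoAssociative}. There is no real obstacle here; the only thing to be careful about is simply to cite both results in the correct direction (we need the ``only if'' direction of the biconditional, i.e.~associativity of $\mathbb{V}^+$ implies pseudo-associativity of $\mathbb{V}$, whose contrapositive yields the claim).

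Concretely, the proof I would write is: \emph{Immediate from Fact \ref{fact:OctonionsAreNotPseudoAssociative} and Proposition \ref{prop:V+IffV}(2).} This parallels exactly the structure of the earlier corollary that $\mathbb{H}^+$ is not commutative, which was proved in the same one-line fashion from Fact \ref{fact:QuaternionsAreNotPseudoCommutative} and Proposition \ref{prop:V+IffV}(1).
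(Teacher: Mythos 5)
Your proposal is correct and is exactly the paper's proof: the corollary follows immediately by combining Fact \ref{fact:OctonionsAreNotPseudoAssociative} with item 2 of Proposition \ref{prop:V+IffV} (used in the contrapositive direction), just as in the quaternion case. Nothing further is needed.
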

\begin{proof}
Immediate by Fact \ref{fact:OctonionsAreNotquasiAssociative} and Proposition \ref{prop:V+IffV}.
\end{proof}

\section{Modal algebras over a field}

\begin{definition}
\label{definition:R-linear}
A {\em modal} $\mathbb{K}$-{\em algebra} is a triple $(\mathbb{V}, \star, R)$ such that $(\mathbb{V}, \star)$  is a $\mathbb{K}$-algebra and $R \subseteq V \times V$ is 
{\em compatible} with the scalar product, and it preserves the zero-vector:
\begin{itemize}
\item[(L1R)] $v R u \with z R w \ \Rightarrow \ \forall \gamma\delta\, \exists \alpha \beta \ \ (\gamma v + \delta z) R (\alpha u + \beta w)$;
\item[(L2R)] $ t R (\alpha u + \beta v)  \ \Rightarrow \ \exists \lambda \mu\, \exists z w \ \ z R u \,\with\, w R v \,\with\, \lambda z + \mu w = t$.
\item[(L3R)] $x R 0 \ \Leftrightarrow \ x = 0$.
\end{itemize}
\end{definition}


If $(\mathbb{V}, \star, R)$ is a modal $\mathbb{K}$-algebra, let $\wdia: \mathcal{P}(\mathbb{V}) \to \mathcal{P}(\mathbb{V})$ be defined as follows:  \[\wdia X: = R^{-1}[X]  = \{ v \mid \exists u (vRu \mbox{ and } u\in X) \}.\]
\begin{lemma}
\label{lemma:diamond-nucleus}
If $(\mathbb{V}, R)$ is a modal $\mathbb{K}$-algebra, $[-]: \mathcal{P}(\mathbb{V})\to \mathcal{P}(\mathbb{V})$ is a $\wdia$-nucleus on $(\mathcal{P}(\mathbb{V}), \wdia)$, i.e.~for all $X \in \mathcal{P}(\mathbb{V})$,
\[ \wdia [X] \subseteq [\wdia X].\]
\end{lemma}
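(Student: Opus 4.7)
The plan is to prove the inclusion $\wdia[X] \subseteq [\wdia X]$ by induction on the number $n$ of summands in a representation $u = \sum_{i=1}^n \alpha_i x_i$ (with $x_i \in X$ and $\alpha_i \in \mathbb{K}$) of a witness $u \in [X]$ for which $vRu$ holds. The characterisation of $[X]$ as the set of linear combinations of elements of $X$ (stated in Section \ref{sec: analysis}) makes this parameter well defined: unfolding definitions, $v \in \wdia[X]$ iff $vRu$ for some such $u$, and the goal $v \in [\wdia X]$ amounts to exhibiting $v$ as a linear combination of elements $z$ with $zRx$ for some $x \in X$.

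For the inductive step, I would regroup $u$ as $1\cdot u' + \alpha_n \cdot x_n$, where $u' = \sum_{i=1}^{n-1} \alpha_i x_i \in [X]$, and apply (L2R) to $vRu$ to obtain vectors $z,w$ and scalars $\lambda,\mu$ with $zRu'$, $wRx_n$, and $v = \lambda z + \mu w$. The inductive hypothesis applied to $z$ (whose witness $u'$ is a combination of only $n-1$ generators) gives $z \in [\wdia X]$, while $w \in \wdia X \subseteq [\wdia X]$ follows directly from $wRx_n$ and $x_n \in X$. Since $[\wdia X]$ is a subspace, it is closed under linear combinations, so $v = \lambda z + \mu w \in [\wdia X]$.

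The base cases are handled by (L3R). When the sum is empty, $u = 0$, so $vR0$ forces $v = 0 \in [\wdia X]$. When the sum has a single term $\alpha_1 x_1$, (L2R) is not directly applicable, so I would pad the expression as $\alpha_1 x_1 + 0 \cdot 0$, apply (L2R), and note that the resulting $w$ satisfies $wR0$ and hence $w = 0$ by (L3R); this leaves $v = \lambda z$ with $zRx_1$, so $v \in [\wdia X]$.

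The main subtlety is that (L2R) is formulated for a binary sum but the ambient vector space requires decomposing sums of arbitrary length. The induction handles this cleanly; crucially, the witnesses returned by (L2R) are related via $R$ to the two \emph{summand vectors} $u'$ and $x_n$, not to their scalar multiples, which is exactly what is needed for the inductive hypothesis to kick in and for $w$ to land in $\wdia X$. Condition (L3R) plays the complementary role of eliminating the spurious dummy term introduced when padding unary (or empty) sums to fit the binary pattern of (L2R).
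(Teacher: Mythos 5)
Your proof is correct and follows essentially the same route as the paper: decompose the witness $u \in [X]$ via (L2R) and use (L3R) to dispose of the degenerate cases, concluding that $v$ lands in the subspace $[\wdia X]$. The only difference is one of rigor, not of method: the paper applies the binary condition (L2R) to an $n$-ary linear combination in a single step (and treats $X=\varnothing$ separately), whereas you make the needed induction on the number of summands, together with the padding of unary and empty sums, explicit.
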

\begin{proof}
By definition, $\wdia [X] = \bigcup\{R^{-1}[u] \mid u \in [X]\}$. Let $u \in [X]$, assume that $v R u$ and let us show that $v \in [\wdia X]$. Since $u = \Sigma_{j} \beta_{j} x_{j}$ for $x_{j}\in X$, by $L2R$, $\forall j \exists \lambda_j\, \exists v_j \ v_j R x_j \, \with \, \Sigma_j \lambda_j v_j = v$. So $v \in [\wdia X]$. If $X = \varnothing$, then $\wdia [\varnothing] = \wdia \{0\} = R^{-1}[0]$. By $L3R$, $R^{-1}[0] = \{0\} \subseteq [\wdia X]$. 
\end{proof}
Hence, by the generalization of the representation theory of residuated lattices \cite{B-Int-FEP,B-InvNL}, Lemma \ref{lemma:diamond-nucleus} implies that the following construction is well defined: 

\begin{definition}
If $(\mathbb{V}, \star, R)$ is a modal $\mathbb{K}$-algebra, let $\mathbb{V}^+: = (\mathcal{S}(\mathbb{V}), \leq, \mand, \mrarr, \mlarr, \wdia, \bbox)$ be the complete modal residuated lattice generated by $(\mathbb{V}, \star, R)$, 
i.e.~for all $\mathbb{U}, \mathbb{W} \in \mathcal{S}(\mathbb{V})$,
\begin{equation}
\label{eq:adjunction}\wdia \mathbb{U} \subseteq \mathbb{W} \quad \mbox{ iff }\quad \mathbb{U} \subseteq \bbox \mathbb{W},\end{equation}
where
\begin{enumerate}
\item $\wdia \mathbb{U}: = [v \mid \exists u \,(v R u \mbox{ and } u \in U)]$;
\item $\bbox \mathbb{W}: = [u \mid \forall v \,(v R u \ \Rightarrow \ v \in W)]$.
\end{enumerate}
\end{definition}

\begin{remark}
Notice that every linear map $f:\mathbb{V}\to\mathbb{V}$ satisfies the conditions of Definition \ref{definition:R-linear}, and hence functional modal $\mathbb{K}$-algebras $(\mathbb{V}, \star, f)$ can be defined analogously to definition \ref{definition:R-linear} and their associated algebras will be complete modal residuated lattices such that $\wdia $ $f[-]\dashv f^{-1}[-]$ in $\mathcal{S}(\mathbb{V})$. However, if we make use a linear function $f$ (instead of a relation $R$) to define modal $\mathbb{K}$-algebras, then we are not able to show completeness for the full fragment of \textbf{D.NL}$_{\wdia}$. 
\end{remark}

\subsection{Axiomatic extensions of a modal algebra over $\mathbb{K}$}

In order to capture controlled forms of associativity/commutativity, we want to consider
axiomatic extensions of the modal algebras introduced in the previous section. Below, we consider right-associativity and left-commutativity. 

\begin{definition}
\label{def:submodalalgebrasV+}
If $(\mathbb{V}, \star, f)$ is a modal $\mathbb{K}$-algebra, $\mathbb{V}^+: = (\mathcal{S}(\mathbb{V}), \leq, \mand, \mrarr, \mlarr, \wdia, \bbox)$ is:
\begin{enumerate}
\item {\em right-associative} if $(\mathbb{U} \mand \mathbb{W}) \mand \wdia \mathbb{V} \subseteq \mathbb{U} \mand (\mathbb{W} \mand \wdia \mathbb{V})$;
\item {\em left-commutative} if $(\mathbb{U} \mand \mathbb{V}) \mand \wdia \mathbb{W} \subseteq (\mathbb{U} \mand \wdia\mathbb{W}) \mand \mathbb{V}$.
\end{enumerate}
\end{definition}

\begin{definition}
\label{def:submodalalgebrasV}
A modal $\mathbb{K}$-algebra $(\mathbb{V}, \star, R)$ is:
\begin{enumerate}
\item {\em quasi right-associative} if for $u,w,z,v\in\mathbb{V}$ such that $ vRz$, there exists $\alpha,\beta\in\mathbb{K}$ and $v'$ such that $v'R\beta z$ and $(u\star w)\star v=\alpha(u\star (w\star v'))$;
\item {\em quasi left-commutative} if for all $u,w,z,v\in\mathbb{V}$ such that $vRz$ there exists $\alpha,\beta\in\mathbb{K}$ and $v'$ such that $v'R\beta z$ and $(u\star w)\star v=\alpha ((u\star v')\star w)$.
\end{enumerate}
\end{definition}

\begin{proposition}\label{prop: asso+commu+R}
	For every modal $\mathbb{K}$-algebra $(\mathbb{V}, \star, R)$:
	\begin{enumerate}
	\item $\mathbb{V}^+$ is right associative if and only if $(\mathbb{V}, \star, R)$ is quasi right-associative;
	\item $\mathbb{V}^+$ is left commutative if and only if $(\mathbb{V}, \star, R)$ is quasi left-commutative.
	\end{enumerate}
\end{proposition}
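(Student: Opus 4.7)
The plan is to mirror the template of Proposition \ref{prop:V+IffV}: for the forward direction, I test the abstract inclusion at principal subspaces $[u],[w],[z]$, while for the backward direction I reduce the abstract inclusion to simple generators using bilinearity of $\star$. The only new ingredient, compared to the $\wdia$-free case, is the handling of $\wdia\mathbb{Z}$, for which the linearity conditions (L1R)--(L3R) on $R$ are crucial.

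For the backward direction of item 1, fix $\mathbb{U},\mathbb{W},\mathbb{Z}\in\mathcal{S}(\mathbb{V})$. Since both sides of the target inclusion are subspaces, it suffices to verify it on generators. Bilinearity of $\star$, combined with the description of $\wdia\mathbb{Z}$ as the linear closure of $R^{-1}[\mathbb{Z}]$, reduces the task to showing that $(u\star w)\star v\in\mathbb{U}\mand(\mathbb{W}\mand\wdia\mathbb{Z})$ whenever $u\in\mathbb{U}$, $w\in\mathbb{W}$, and $vRz$ for some $z\in\mathbb{Z}$. Pseudo right-associativity applied to $u,w,z,v$ then yields $\alpha,\beta\in\mathbb{K}$ and $v'\in\mathbb{V}$ with $v'R\beta z$ and $(u\star w)\star v=\alpha(u\star(w\star v'))$. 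Since $\beta z\in\mathbb{Z}$, we obtain $v'\in\wdia\mathbb{Z}$, hence $w\star v'\in\mathbb{W}\mand\wdia\mathbb{Z}$ and $\alpha(u\star(w\star v'))\in\mathbb{U}\mand(\mathbb{W}\mand\wdia\mathbb{Z})$, as required.

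For the forward direction, given $u,w,z,v$ with $vRz$, I instantiate right associativity at $\mathbb{U}:=[u]$, $\mathbb{W}:=[w]$, $\mathbb{Z}:=[z]$, and read off $(u\star w)\star v\in\mathbb{U}\mand(\mathbb{W}\mand\wdia\mathbb{Z})$. Bilinearity and the one-dimensionality of $[u]$ and $[w]$ let me pull $u$ and $w$ out of the ambient linear combination, producing some $v^{*}\in\wdia\mathbb{Z}$ with $(u\star w)\star v=u\star(w\star v^{*})$. By construction $v^{*}=\sum_k\mu_k v_k$ with each $v_kR\gamma_k z$; a short induction using (L1R) (pairing $v_1R\gamma_1 z$ and $v_2R\gamma_2 z$ to obtain $(\mu_1 v_1+\mu_2 v_2)R(\alpha\gamma_1+\beta\gamma_2)z$ for suitable $\alpha,\beta$) then shows $v^{*}R\xi z$ for some $\xi\in\mathbb{K}$, with the degenerate base case $v^{*}=0$ handled by (L3R). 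Setting $v':=v^{*}$, $\alpha:=1$, $\beta:=\xi$ witnesses pseudo right-associativity.

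Item 2 is entirely parallel: the backward direction uses pseudo left-commutativity to rewrite the typical generator $(u\star w)\star v''$ (with $v''Rz$, $z\in\mathbb{W}$) as $\alpha((u\star v')\star w)$ with $v'\in\wdia\mathbb{W}$, and the forward direction instantiates at principal subspaces and produces the identity $(u\star w)\star v=(u\star v^{*})\star w$ by the same pull-out argument, concluding with the same (L1R) induction. The main subtle point throughout is the last step of the forward direction in both items: (L1R) only supplies \emph{some} scalar on the right-hand side, but this is exactly the existential pattern appearing in the pseudo conditions, so no stronger linearity assumption on $R$ is needed.
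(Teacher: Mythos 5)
Your proof is correct and takes essentially the same route as the paper's: for the forward direction you instantiate the inclusion at the subspaces generated by $u$, $w$, $z$ and decode membership of $(u\star w)\star v$ in $[u]\mand([w]\mand\wdia[z])$, and for the converse you reduce to generators via bilinearity and apply the pseudo-condition. The only (harmless) difference is that you make explicit, via the (L1R)/(L3R) induction, why the relevant element of $\wdia[z]$ is itself $R$-related to a scalar multiple of $z$, a point the paper silently absorbs by writing $\wdia[z]=R^{-1}[[z]]$.
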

\begin{proof}
	 1. For the left to right direction let $u,w,z,v$ such that $vR z$. By the assumption $([u]\mand[w])\mand R^{-1}[[z]]\subseteq [u]\mand([w]\mand R^{-1}[[z]])$. Since $(u\star w)\star v\in ([u]\mand[w])\mand R^{-1}[[z]]$ it follows that $(u\star w)\star v\in [u]\mand([w]\mand R^{-1}[[z]])$, i.e. there exist $
	 \alpha,\beta\in\mathbb{K}$ and $v'\in\mathbb{V}$ with $v'R\beta z$ such that $(u\star w)\star v=\alpha(u\star(w\star v'))$.
	 
	 For right to left direction let $q\in (\mathbb{U} \mand \mathbb{W}) \mand \wdia \mathbb{Z}$, i.e.\ there exists $u\in\mathbb{U},w\in\mathbb{W}$ and $v\in\wdia\mathbb{Z}$ such that $q=(u\star w)\star v$. Since $v\in \wdia\mathbb{Z}$ there exists $z\in\mathbb{Z}$ such that $vR z$. Then by assumption there exist $\alpha,\beta\in\mathbb{K}$ and $v'\in\mathbb{V}$ such that $v'R\beta z$ and $q=\alpha(u\star (w\star v'))$. It holds that $v'\in \wdia\mathbb{Z}$ since $\beta z\in\mathbb{Z}$, and hence $q\in \mathbb{U} \mand(\mathbb{W} \mand \wdia \mathbb{Z})$.  
	 
	 2. For the left to right direction let $u,w,z,v$ such that $vR z$. By the assumption $([u]\mand [w])\mand R^{-1}[[z]]\subseteq ([u]\mand R^{-1}[[z]])\mand [w]$. Since $(u\star w)\star v\in ([u]\mand [w])\mand R^{-1}[[z]]$ it follows that $(u\star w)\star v\in ([u]\mand R^{-1}[[z]])\mand [w]$, i.e. there exist $
	 \alpha,\beta\in\mathbb{K}$ and $v'\in\mathbb{V}$ with $v'R\beta z$ such that $(u\star w)\star v=\alpha((u\star v')\star w)$.
	 
	 For right to left direction let $q\in(\mathbb{U} \mand \mathbb{W}) \mand \wdia\mathbb{Z}$, i.e.\ there exist $u\in\mathbb{U},w\in\mathbb{W}$ and $v\in\wdia\mathbb{Z}$ such that $q= (u\star w)\star v$. Since $v\in \wdia\mathbb{Z}$ there exists $z\in\mathbb{Z}$ such that $vR z$. Then by assumption there exists $\alpha,\beta\in\mathbb{K}$ and $v'\in\mathbb{V}$ such that $v'R\beta z$ and $q=\alpha((u\star v')\star w)$. It holds that $v'\in \wdia\mathbb{Z}$ since $\beta z\in\mathbb{Z}$, and hence $q\in (\mathbb{U} \mand \wdia \mathbb{Z}) \mand \mathbb{W}$.  
\end{proof}	 
	 \begin{remark}
	 	Notice that in case $R$ is a linear function, the inequalities above imply equality. Indeed, e.g.~in the case of right-associativity, if $zR v$, and $\beta zR v'$ then $v'=\beta v$. Therefore, it immediately follows that $(u\star w)\star v=\alpha(u\star (w\star v))$, and hence $\frac{1}{\alpha}((u\star w)\star v) = u\star( w\star v)$, and hence $\mathbb{U}\mand (\mathbb{W}\mand \wdia \mathbb{Z})\subseteq (\mathbb{U}\mand\mathbb{W})\mand \wdia \mathbb{Z}$.
		\end{remark}

\section{Completeness}\label{sec: completeness} 
The aim of this section is to show the completeness of the logic \textbf{D.NL}$_{\wdia}$ with respect to modal $\mathbb{K}$-algebras of finite dimension (cf.\ Theorem \ref{th: wcomp}).

Given a modal $\mathbb{K}$-algebra $\mathbb{V}$, a valuation on $\mathbb{V}$ is a  function $v:\mathsf{Prop}\to\mathbb{V}^+$. As usual, $v$ can be extended to a homomorphism $\llbracket-\rrbracket_v:\mathsf{Str}\to\mathbb{V}^+$. We say that $\mathbb{V},v\models S \Rightarrow T$ if and only if $\llbracket S\rrbracket_v\subseteq \llbracket T\rrbracket_v$.

\begin{theorem}[Completeness]\label{th: wcomp}
Given any sequent $X \Rightarrow Y$ of \textbf{D.NL}$_{\wdia}$, if $\mathbb{V},\upsilon\models X \Rightarrow Y$ for every modal $\mathbb{K}$-algebra $\mathbb{V}$ of finite dimension and any valuation $\upsilon$ on $\mathbb{V}$, then $X \Rightarrow Y$ is a provable sequent in \textbf{D.NL}$_{\wdia}$.
\end{theorem}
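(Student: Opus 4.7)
The strategy I would pursue proceeds by contrapositive, combining the finite model property of $\textbf{D.NL}_{\wdia}$ with respect to modal residuated posets (cited in the paper via \cite[Theorem 49]{GJLPT-LE-logics}) with a vectorial representation of finite modal residuated posets. Suppose $X \Rightarrow Y$ is not derivable in $\textbf{D.NL}_{\wdia}$; then there exist a finite modal residuated poset $(P, \leq, \mand, \mrarr, \mlarr, \wdia, \bbox)$ and a valuation $u : \mathsf{AtProp} \to P$ with $P, u \not\models X \Rightarrow Y$. The goal is to transfer this refutation to a finite-dimensional modal $\mathbb{K}$-algebra.

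Concretely, I would construct a finite-dimensional modal $\mathbb{K}$-algebra $(\mathbb{V}, \star, R)$ together with an injective map $\iota : P \to \mathcal{S}(\mathbb{V})$ that reflects the order and is a homomorphism of modal residuated posets, i.e.\ preserves $\mand, \mrarr, \mlarr, \wdia, \bbox$. Given such $\iota$, the valuation $v := \iota \circ u$ on $\mathbb{V}^+$ satisfies $\llbracket S \rrbracket_v = \iota(\llbracket S \rrbracket_u)$ by an easy induction on structures, and order-reflection of $\iota$ then yields $\mathbb{V}, v \not\models X \Rightarrow Y$ as required.

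For the construction, I would take $\mathbb{V}$ to be the $|P|$-dimensional $\mathbb{K}$-vector space with basis $\{ e_p : p \in P \}$, set $\iota(p) := [\{ e_q : q \leq p \}]$ so that order is trivially preserved and reflected, define the bilinear product on basis elements by $e_p \star e_q := e_{p \mand q}$ (extended bilinearly), and define the linear relation $R$ by declaring $e_q\, R\, e_p$ whenever $q \leq \wdia p$ on basis vectors and closing under conditions (L1R)--(L3R) of Definition \ref{definition:R-linear}. The $\mathbb{K}$-algebra axioms and linearity of $R$ then hold essentially by design, and the subspace operations of $\mathbb{V}^+$ on $\iota$-images match the poset operations of $P$ up to checking one nontrivial direction.

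The main obstacle is to verify that $\iota$ preserves the operations strictly. For $\mand$, one immediately has $\iota(p) \otimes \iota(q) \subseteq \iota(p \mand q)$ by monotonicity, but the converse requires every $c \leq p \mand q$ in $P$ to decompose as $a \mand b$ with $a \leq p$ and $b \leq q$, which may fail in a general finite residuated poset. The cleanest remedy is to pre-process $P$ by passing to its Dedekind--MacNeille completion $\overline{P}$, which for finite $P$ is a finite complete modal residuated lattice receiving an embedding $P \hookrightarrow \overline{P}$ preserving $\mand, \mrarr, \mlarr, \wdia, \bbox$; in $\overline{P}$ the identity $p \mand q = \bigvee\{ a \mand b : a \leq p, b \leq q \}$ holds because $\mand$ distributes over joins, and applying the basis construction to $\overline{P}$ therefore yields strict preservation of $\mand$, and analogously of $\wdia$ via the adjunction governing $R$. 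Preservation of the residuals $\mrarr, \mlarr$ and of $\bbox$ then follows from uniqueness of adjoints once order-reflection is in place. The most delicate point is ensuring that closing the seed relation $R$ under (L1R)--(L3R) does not enlarge $R^{-1}[\iota(p)]$ beyond $\iota(\wdia p)$; this is guaranteed by the linear independence of the basis and by the fact that $R$ was seeded precisely on basis pairs respecting the $\wdia$-relation of $\overline{P}$.
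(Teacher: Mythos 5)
Your overall strategy (contrapositive via the finite model property for modal residuated posets, then embedding a finite refuting poset into $\mathcal{S}(\mathbb{V})$ for a finite-dimensional modal $\mathbb{K}$-algebra) is exactly the paper's, but your concrete construction has a genuine gap at the step you yourself flag, and the proposed remedy does not close it. With one basis vector $e_p$ per element and $e_p\star e_q:=e_{p\mand q}$, you need $\iota(p\mand q)\subseteq\iota(p)\otimes\iota(q)$, i.e.\ every basis vector $e_c$ with $c\leq p\mand q$ must lie in the span of $\{e_{a\mand b}\mid a\leq p,\ b\leq q\}$; since these are all basis vectors, this forces $c$ itself to equal some $a\mand b$ with $a\leq p$, $b\leq q$. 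The Dedekind--MacNeille completion does not buy this: the identity $p\mand q=\bigvee\{a\mand b\mid a\leq p,\ b\leq q\}$ concerns joins in $\overline{P}$, but $\iota$ does not turn joins in $\overline{P}$ into joins of subspaces (it is not join-preserving), so an element strictly below $p\mand q$ that is not itself a product is simply missed. A concrete counterexample that is already a finite complete residuated lattice (so completion changes nothing): the chain $0<c<u<1$ with unit $1$, $u\mand u=u$, and all other products among $\{0,c,u\}$ equal to $0$. Here $c\leq u=u\mand u$, but $\{a\mand b\mid a,b\leq u\}=\{0,u\}$, so $e_c\in\iota(u\mand u)$ while $e_c\notin\iota(u)\otimes\iota(u)$.

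This is precisely why the paper does not use an $|P|$-dimensional space: it takes an $n^2$-dimensional space with $n$ basis vectors $e^m_1,\dots,e^m_n$ attached to each $p_m$, and defines $e^k_m\star e^\ell_r:=\nu_t(m,r)$ where $\nu_t$ is a \emph{surjection} from $n\times n$ onto the set of all basis vectors of $h(p_t)$, $p_t=p_k\mand p_\ell$ (with the diagonal condition $\nu_k(m,m)=e^k_m$ used for the residuals). Surjectivity is what guarantees that every generator of $h(p_m\mand p_k)$ is literally a product of basis vectors from $h(p_m)$ and $h(p_k)$, which is the decomposition property your construction lacks. A similar remark applies to your modal relation: ``closing the seed relation under (L1R)--(L3R)'' is not obviously well defined or controllable, and condition (L2R) is again a decomposition requirement; the paper avoids this by defining $R$ explicitly on all vectors, using the multiple copies per element and an injectivity-type side condition, and then checking (L1R)--(L3R) and $h(\wdia p)=\wdia h(p)$, $h(\bbox p)=\bbox h(p)$ directly. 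So the architecture of your argument is right, but the $n$-dimensional construction must be replaced by one with enough redundancy (as in the paper) for the fusion and diamond clauses to be preserved exactly.
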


As discussed in Section \ref{ssec:displaycalc}, \textbf{D.NL}$_{\wdia}$ is complete and has the finite model property with respect to modal residuated posets.  Therefore, to show Theorem \ref{th: wcomp}, it is enough to show that any finite modal residuated poset can be embedded into the modal residuated lattice of subspaces of a modal $\mathbb{K}$-algebra of finite dimension.

Let $P$ be a finite residuated poset. We will define a modal $\mathbb{K}$-algebra $\mathbb{V}$ and a \textbf{D.NL}$_{\wdia}$-morphism $h:P\to\mathcal{S}(\mathbb{V})$ which is also an order embedding.

Let $n$ be the number of elements of $P$, and let $\{p_1,\ldots,p_n\}$ be an enumeration of $P$.  
Let $\mathbb{V}$ be the $n^2$-dimensional vector space over $\mathbb{K}$ and let $\{e^i_j\mid 1\leq i,j\leq n\}$ be a base. Let $h:P\to\mathbb{V}$ be defined as $$h(p_k)=[e^m_j\mid 1\leq j\leq n\quad \&\quad  p_m\leq p_k].$$ 

We define $\star:\mathbb{V}\times\mathbb{V}\to\mathbb{V}$ on the base as follows: For every $p_k\in\mathcal{P}$ take an surjective map $$\nu_{k}:n\times n\to\{e^m_j\mid 1\leq j\leq n\quad \&\quad  p_m\leq p_k\}$$ such that $\nu_{k}(m,m)=e^k_m$. Define $e^k_m\star e^\ell_r=\nu_{t}(m,r)$, where $p_t=p_k\otimes p_\ell$. This function uniquely extends to a bilinear map and compatible with the scalar product.

We define the relation $R\subseteq\mathbb{V}\times\mathbb{V}$ as follows  $0 R 0$ and  $$\sum_{1\leq i\leq d}\sum_{0\leq j\leq d_i}\alpha^{k^i_j}_{\ell_j} e^{k^i_j}_{\ell_j}R\sum_{1\leq i\leq d}\beta_{j_i}^{m_i} e_{j_i}^{m_i}$$ where $\alpha^{k^i_j}_{\ell_j},\beta_{j_i}^{m_i}\in\mathbb{K}$, $\beta_{j_i}^{m_i}\neq 0$, $p_{k^i_j}\leq \wdia p_{m_i}$ and if $m_i=m_k$ then $j_i\neq j_k$ for $1\leq i,k\leq d$. It is immediate that $R$ satisfies the properties of Definition \ref{definition:R-linear}.


The lemma below shows that $h$ is indeed a \textbf{D.NL}$_{\wdia}$-morphism  which is also an order embedding.
\begin{lemma}
	The following are true for the poset $P$ and $h$ as above.
	\begin{enumerate}
		\item $p\leq q$ if and only if $h(p)\subseteq h(q)$;
		\item $h(p_m\otimes p_k)=h(p_m)\otimes h(p_k)$;
		\item $h(p_m\backslash p_k)=h(p_m)\backslash h(p_k)$;
		\item $h(p_m/p_k)=h(p_m)/h(p_k)$;
		\item $h(\wdia p_k)=\wdia h(p_k)$.
		\item $h(\bbox p_k)=\bbox h(p_k)$.
\end{enumerate}\end{lemma}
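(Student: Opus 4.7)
The plan is to verify each item by exploiting two key features of the construction: first, membership in $h(p_k)$ depends only on the superscript of a basis vector (so $e^m_j \in h(p_k)$ iff $p_m \leq p_k$); second, the diagonal condition $\nu_k(m,m) = e^k_m$, which provides a canonical preimage controlling the fusion on diagonal basis vectors.

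Items 1 and 2 are direct. For item 1, the implication $p \leq q \Rightarrow h(p) \subseteq h(q)$ is immediate from the definition of $h$; the converse follows because $e^k_1 \in h(p_k)$ forces $p_k \leq p_\ell$ whenever $h(p_k) \subseteq h(p_\ell)$. For item 2, the inclusion $h(p_m) \otimes h(p_\ell) \subseteq h(p_m \otimes p_\ell)$ follows by bilinearity and monotonicity of $\otimes$ in $P$: each generator $e^a_b \star e^c_d$ with $p_a \leq p_m$ and $p_c \leq p_\ell$ equals $\nu_{a \otimes c}(b,d)$, a basis vector whose superscript is bounded above by $p_a \otimes p_c \leq p_m \otimes p_\ell$. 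The reverse uses surjectivity of $\nu_{m \otimes \ell}$ to realize every basis vector generating $h(p_m \otimes p_\ell)$ as $e^m_b \star e^\ell_d$ for suitable $b,d$.

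For items 3 and 4, by symmetry it suffices to treat item 3. The inclusion $h(p_m \backslash p_k) \subseteq h(p_m) \backslash h(p_k)$ is routine: items 1 and 2 give $h(p_m) \otimes h(p_m \backslash p_k) = h(p_m \otimes (p_m \backslash p_k)) \subseteq h(p_k)$, and residuation in $\mathcal{S}(\mathbb{V})$ yields the claim. The reverse requires showing that if $w \in h(p_m) \backslash h(p_k)$, then each basis vector $e^{a_0}_{b_0}$ in the unique basis expansion of $w$ satisfies $p_m \otimes p_{a_0} \leq p_k$. Testing against $u = e^m_{b_0} \in h(p_m)$, the diagonal condition yields $\nu_{m \otimes a_0}(b_0, b_0) = e^{m \otimes a_0}_{b_0}$ as a contribution to $u \star w$, which must then lie in $h(p_k)$; the argument is completed by showing this contribution cannot be canceled by the remaining summands $\nu_{m \otimes a}(b_0, b)$ for other $(a,b)$ in the support of $w$.

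Items 5 and 6 follow from the explicit definition of $R$. For item 5, the inclusion $\wdia h(p_k) \subseteq h(\wdia p_k)$ holds because any $v R u$ with $u = e^{m_i}_{j_i}$ a basis vector of $h(p_k)$ forces, by the defining clause of $R$, $v$ to decompose as a combination of $e^{k^i_j}_{\ell_j}$ with $p_{k^i_j} \leq \wdia p_{m_i} \leq \wdia p_k$ (using monotonicity of $\wdia$). The reverse is proved by exhibiting an explicit $R$-witness: for each basis vector $e^a_b \in h(\wdia p_k)$ (so $p_a \leq \wdia p_k$), taking $m_1 = k$, $k^1_0 = a$, and $\ell_0 = b$ produces a valid instance $e^a_b R e^k_{j_1}$ with $e^k_{j_1} \in h(p_k)$. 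Item 6 is proved analogously, exploiting the dual description of $\bbox$ as $\bbox X = \{u \mid \forall v\,(vRu \Rightarrow v \in X)\}$ (which is already a subspace by condition L2R). The main obstacle is the reverse inclusion in item 3: since $\mathbb{F}_2$ admits no nontrivial coefficients, ruling out cancellation of the diagonal term $e^{m \otimes a_0}_{b_0}$ requires analyzing when a collision $\nu_{m \otimes a}(b_0, b) = e^{m \otimes a_0}_{b_0}$ can occur (which forces $p_{m \otimes a_0} \leq p_{m \otimes a}$) and, if necessary, testing $w$ against several distinct elements of $h(p_m)$ to combinatorially exclude all cancellation patterns.
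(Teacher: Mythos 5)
Your handling of items 1, 2, 5 and 6, and of the easy inclusions in items 3 and 4, matches the paper's proof in substance (the paper checks the easy half of item 3 directly on generators rather than via items 1--2 and residuation, but the content is the same). The genuine problem is exactly the step you defer to the end: the survival of the diagonal term in the hard inclusion $h(p_m)\backslash h(p_k)\subseteq h(p_m)\,\backslash\, h(p_k)$, i.e.\ $h(p_m)\backslash h(p_k)\subseteq h(p_m\backslash p_k)$. You do not give the non-cancellation argument, and in fact no such argument can exist for the construction as it stands: the cancellation you worry about really occurs, it does not depend on the off-diagonal values of the maps $\nu_t$, and it cannot be excluded by testing against further elements of $h(p_m)$. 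Concretely, suppose $P$ contains $p_{a_1}\neq p_{a_2}$ with $p_c\otimes p_{a_1}=p_c\otimes p_{a_2}$ for every $p_c\leq p_m$. Then for $w=e^{a_1}_{1}+e^{a_2}_{1}$ and any generator $e^{c}_{j}$ of $h(p_m)$ one gets $e^{c}_{j}\star w=\nu_{t}(j,1)+\nu_{t}(j,1)=0$ over $\mathbb{F}_2$, where $p_t=p_c\otimes p_{a_1}=p_c\otimes p_{a_2}$; hence $h(p_m)\otimes[w]=\{0\}$ and $w\in h(p_m)\backslash h(p_k)$ for \emph{every} $p_k$, although $w\notin h(p_m\backslash p_k)$ as soon as $p_m\otimes p_{a_1}\nleq p_k$ (uniqueness of basis expansions). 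Such configurations exist: take for $P$ the four-element Boolean algebra $\{\bot,a,b,\top\}$ with $\otimes=\wedge$ and identity modalities, and set $p_m=p_{a_1}=a$, $p_{a_2}=\top$, $p_k=b$; every $p_c\leq a$ satisfies $p_c\wedge a=p_c=p_c\wedge\top$, while $a\wedge a=a\nleq b$ and $a\backslash b=b$.

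So the obstacle you isolate is not a technicality to be settled by a finer collision analysis or by further test vectors; it shows the inclusion can actually fail for the construction as given, whatever surjections $\nu_k$ are chosen. Note that the paper's own proof passes over exactly this point: after testing against $\sum_j e^m_j$ it asserts that ``each $e^{m}_{j}\star e^{i}_{j}\in h(p_k)$'' on the grounds of bilinearity and uniqueness of representation, which is precisely the no-cancellation claim and is refuted by the example above. Your proposal therefore reproduces the paper's strategy and correctly puts its finger on the weak point; closing it requires modifying the construction (for instance a product on basis vectors that records the factors finely enough that $h(p_m)\otimes[w]$ cannot collapse to $\{0\}$ for $w$ outside $h(p_m\backslash p_k)$, or a restriction on the class of posets $P$), not merely a sharper argument for the same $\star$.
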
  
\begin{proof}
	1. Assume that $p\leq q$. Let $\sum_{i,j}\alpha^i_j e^i_j$ an element of $h(p)$ where $p_i\leq p$. Then by assumption $p_i\leq q$, and therefore $\sum_{i,j}\alpha^i_j e^i_j\in h(q)$. For the other direction, assume that $p_m=p\nleq q$, then $e^m_1\notin h(q)$, since each $e^i_j$ is independent from the rest.
	
	2. Let $u\in h(p_m\otimes p_k)$ that is, $u=\sum_{i,j}\alpha^i_j e^i_j$ where $p_i\leq p_m\otimes p_k=p_\ell$. Since $\nu_{\ell}$ is surjective there is $(z^i_j,x^i_j)$ such that $\nu_{\ell}(z^i_j,x^i_j)=e^i_j$. By definition $e^m_{z^i_j}\star e^k_{x^i_j}=e^i_j$. Since $e^m_{z^i_j}\in h(p_m)$ and $e^k_{x^i_j}\in h(p_k)$ for each $i,j$, we have that $$h(p_m)\otimes h(p_k)\ni\sum_{i,j}\alpha^i_j(e^m_{z^i_j}\star e^k_{x^i_j})=\sum_{i,j}\alpha^i_je^i_j=u.$$ 
	
	Conversely let $u\in e(p_m)\otimes e(p_k)$, i.e. $u=\sum_{i,j}\alpha^i_j(e^{m_i}_{m_j}\star e^{k_i}_{k_j})$ where $p_{m_i}\leq p_m$ and $p_{k_i}\leq p_k$. Then $p_{m_i}\otimes p_{k_i}\leq p_m\otimes p_k$. Then, since  $e^{m_i}_{m_j}\star e^{k_i}_{k_j}\in h(p_{m_i}\otimes p_{k_i})$, we have $e^{m_i}_{m_j}\star e^{k_i}_{k_j}\in h(p_m\otimes p_k)$ for each $i$, so $u\in h(p_m\otimes p_k)$.
	
	3. Let $u\in h(p_m\backslash p_k)$. Then $u=\sum_{i,j}\alpha^i_j e^i_j$ where $p_i\leq p_m\backslash p_k$. By adjunction this means that $p_m\otimes p_i\leq p_k$. Pick $\sum_{i',j'}\beta^{i'}_{j'} e^{i'}_{j'}\in h(p_m)$, i.e.\ $p_{i'}\leq p_m$. Notice by monotonicity $p_{i'}\otimes p_i\leq p_k$. Now $$(\sum_{i',j'}\beta^{i'}_{j'} e^{i'}_{j'})\star (\sum_{i,j}\alpha^i_j e^{i}_j)=\sum_{i,i',j,j'}\beta^{i'}_{j'} \alpha^i_j(e^{i'}_{j'}\star e^{i}_{j}).$$ Each of the components are by definition in $h(p_{i'}\otimes p_i)$, and by monotonicity in $h(p_k)$. So for every $w\in h(p_m)$, $w\star u\in h(p_k)$. Therefore $u\in h(p_m)\backslash h(p_k)$.
	
	Conversely, let $u=\sum_{i,j}\alpha^i_j e^{i}_{j}\in h(p_m)\backslash h(p_k)$. Then for every $w\in h(p_m)$, $w\star u\in h(p_k)$. In particular for $w=\sum_{j}e^{m}_{j}$, $$(\sum_{j}e^{m}_{j})\star(\sum_{i,j}\alpha^i_j e^{i}_{j})\in h(p_k)$$. Since $\star$ is bilinear and every element has a unique representation given a base, each  $e^{m}_{j}\star e^{i}_{j}\in h(p_k)$. Let $p_r=p_m\otimes p_i$. By definition of $\nu_{r}$, $e^{m}_{j}\star e^{i}_{j}=e^{r}_{j}\in h(p_k)$ and therefore $p_m\otimes p_i\leq p_k$. That is $p_i\leq p_m\backslash p_k$, i.e. $e^{i}_j\in h(p_m\backslash p_k)$ for each $j$. Therefore $u\in h(p_m\backslash p_k)$.  
	
	4. The proof is the same as item 3.
	
	5. Let $u\in h(\wdia p_k)$,  i.e., $u=\sum_{i}\alpha_{j_i}^{m_i} e_{j_i}^{m_i}$ where $p_{m_i}\leq \wdia p_k$. Since $e^{m_i}_{j_i}Re^k_1$ for each $i$, it follows that  $e^{m_i}_{j_i}\in R^{-1}[h(p_k)]$, for each $i$ and hence $u\in\wdia h(p_k)$. 
	
	Conversely let $u\in \wdia h(p_k)$, i.e. $u\in R^{-1}[h(p_k)]$. By definition of $R$ and the monotonicity of $\wdia$ it follows that $uRe_1^{k}$. So $u=\sum_{i}\alpha_{j_i}^{m_i} e_{j_i}^{m_i}$ where $p_{m_i}\leq \wdia p_k$, i.e.\ $u\in h(\wdia p_k)$.
	
	6.Let $u\in h(\bbox p_k)$. Then $u=\sum_{i}\beta^{m_i}_{j_i} e^{m_i}_{j_i}$ where $p_{m_i}\leq \bbox p_k$. By adjunction this means that $\wdia p_{m_i}\leq p_k$. Let $vRu$ then $v=\sum_{i}\sum_{0\leq j\leq n_i}\alpha^{\ell^i_j}_{r_j} e^{\ell^i_j}_{r_j}$ where $p_{\ell^i_j}\leq \wdia p_{m_i}$. Then $p_{\ell^i_j}\leq p_k$ and therefore $v\in h(p_k)$. Hence $u\in \bbox h(p_k)$.
	
	Conversely, let $u=\sum_{i}\beta^{m_i}_{j_i} e^{m_i}_{j_i}\in \bbox h(p_k)$, i.e.\ $v\in h(p_k)$ for every $v$ such that $vRu$. Notice that $\sum_{i}e^{\ell_i}_{j_i}Ru$ where $p_{\ell_i}=\wdia p_{m_i}$. Since $v\in h(p_k)$ it follows that $\wdia p_{m_i}\leq p_k$ and by adjunction $p_{m_i}\leq\bbox p_k$. Then $e^{m_i}_{j_i}\in h(\bbox p_k)$, for every $i$ and therefore $u\in h(\bbox p_k)$. 
\end{proof}

\begin{remark}
	In the proof above the finiteness of $P$ was used  only to guarantee the dimension of $\mathbb{V}$ to be finite. The same proof holds for an arbitrary modal residuated poset $P$ with a modal $\mathbb{K}$-algebra of dimension $|P\times P|$. That is, every modal residuated poset, and in particular the Lindenbaum-Tarski algebra of \textbf{D.NL}$_{\wdia}$, can be embedded into the lattice of subspaces of some modal $\mathbb{K}$-algebra. 
\end{remark}

\begin{remark}
	In the proof of Theorem \ref{th: wcomp}, we showed that in fact $h$ embeds $P$ into the subalgebra $\{[e_i^j\ \mid\  (i,j)\in S]\mid\  S\subseteq n\times n\}$ which is a Boolean subalgebra of $\mathbb{V}^+$. This is analogous to Buszkowski's proof (see e.g.\ \cite{B-InvNL}) that generalized Lambek calculus is complete with respect to algebraic models based on powerset algebras. 
\end{remark}

\section{Conclusions and further directions}

\paragraph{Our contributions.} In this paper we have taken a duality-theoretic perspective on vector space semantics of the basic modal Lambek calculus and some of its analytic extensions. In a slogan, we have regarded vector spaces (more specifically, modal $\mathbb{K}$-algebras) as Kripke frames. This perspective has allowed to transfer a number of results pertaining to the theory of modal logic to the vector space semantics. Our main contributions are the proof of completeness of the basic modal Lambek calculus $\textbf{D.NL}_{\wdia}$ with respect to the semantics given by the modal $\mathbb{K}$-algebras and a number of ensuing Sahlqvist correspondence results. 
\paragraph{Correspondence and completeness.} In the standard Kripke semantics setting, the completeness of the basic logic and canonicity via correspondence immediately implies that any axiomatic extension of the basic logic with Sahlqvist-type axioms is complete with respect to the elementary class of relational structures defined by the first order correspondents of its axioms. We plan to extend this result to the vector space semantics. 
\paragraph{Adding lattice connectives.} Another direction we plan to pursue consists in extending the present completeness result to the full Lambek calculus signature. Towards this goal, the representation results of \cite{J-ModLatt,J-ReprLatt,F-CMLPS}, which embeds each complemented modular Arguesian lattice into the lattice of subspaces of a vector space (over a division ring), is likely to be particularly relevant. 
\paragraph{Finite vector spaces.} We plan to refine our results so as to give upper bounds on the dimensions of possible witnesses of non derivable sequents. 

\paragraph{Acknowledgements.} We would like to thank Peter Jipsen for numerous observations and suggestions that have substantially improved this paper. We would also like to thank the two anonymous referee for insightful remarks and suggestions.


\end{document}